\title{Hyperbolic triangulations and discrete random graphs}
\author{Eryk Kopczyński, Dorota Celińska-Kopczyńska}
\def\map{\mu}
\def\bbN{{\mathbb N}}
\def\bbH{{\mathbb H}}
\def\bbR{{\mathbb R}}
\def\ra{\rightarrow}
\def\dist{\delta}
\def\tally{\textsc{Tally}}
\def\edgetally{\textsc{Edgetally}}
\newtheorem{theorem}{Theorem}
\newtheorem{definition}[theorem]{Definition}
\newtheorem{proposition}[theorem]{Proposition}
\newtheorem{conjecture}[theorem]{Conjecture}
\newtheorem{intuition}[theorem]{Intuition}
\numberwithin{theorem}{section}
\begin{document}

\maketitle

\begin{abstract}
The hyperbolic random graph model (HRG) has proven useful in the analysis of scale-free networks,
which are ubiquitous in many fields, from social network analysis to biology. However, working with
this model is algorithmically and conceptually challenging because of the nature of the distances in the hyperbolic plane.
In this paper we study the algorithmic properties of regularly generated triangulations in the 
hyperbolic plane. We propose a discrete variant of the HRG model where nodes are mapped to the
vertices of such a triangulation; our algorithms allow us to work with this model in a simple yet efficient way.
We present experimental results conducted on real world networks to evaluate the practical benefits of
DHRG in comparison to the HRG model.
\end{abstract}

\vspace{2em}


\newpage

\def\gshort#1#2#3{G_{#1#2#3}}

\def\geoz{\gshort 610}
\def\ghoz{\gshort 710}
\def\ghoo{\gshort 711}
\def\gooz{\gshort 810}
\def\gqab{\gshort qab}

\section{Introduction}

Hyperbolic geometry has been discovered by the 19th century mathematicians wondering about
the nature of parallel lines. One of the properties of this geometry is that the amount
of area in distance $d$ from a given point is exponential in $d$; intuitively, 
the metric structure of the hyperbolic plane is similar to that of an infinite binary
tree, except that each vertex is additionally connected to two adjacent vertices on the same
level. 

\begin{figure}[ht]
\begin{center}
\includegraphics[width=.24\textwidth]{}
\includegraphics[width=.24\textwidth]{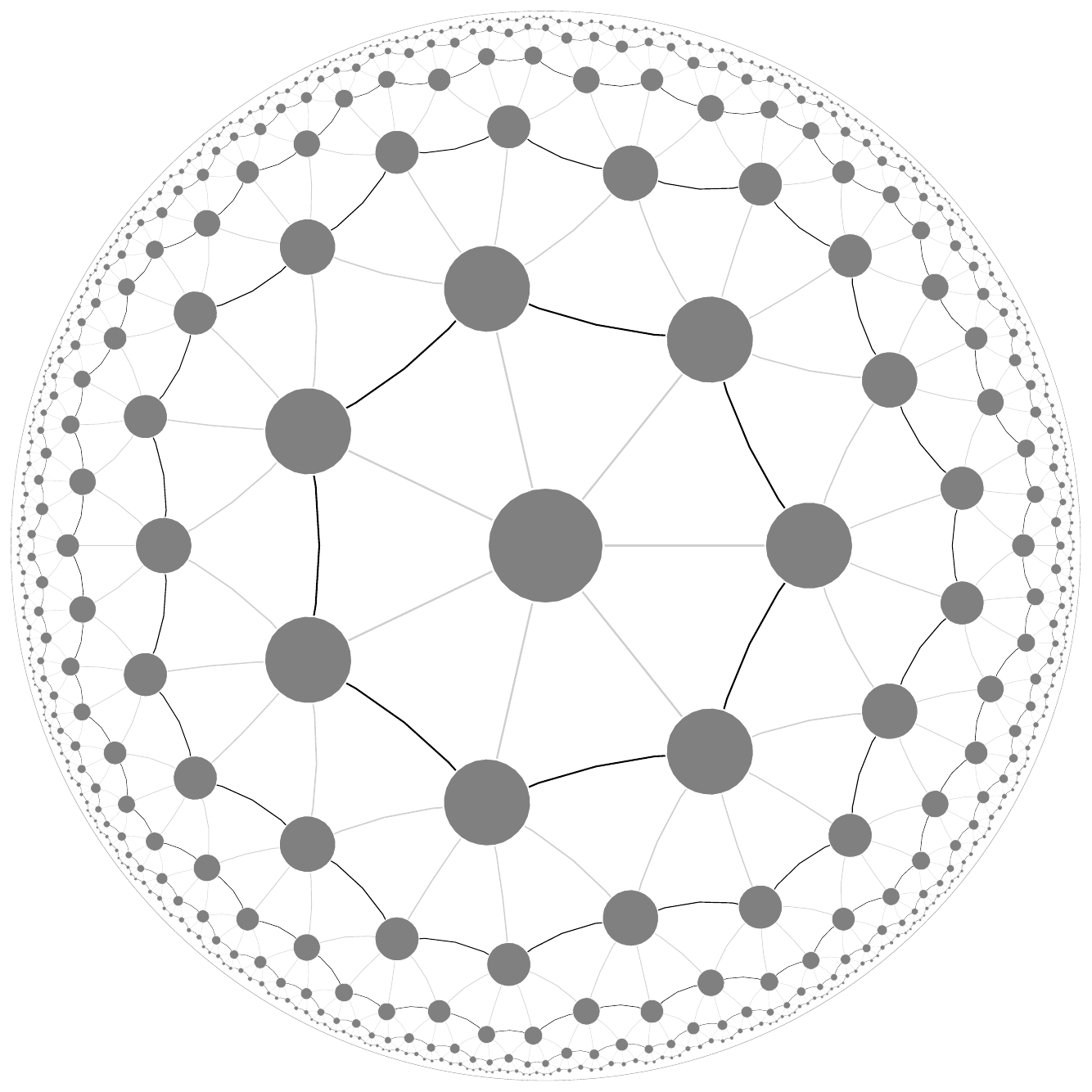}
\includegraphics[width=.24\textwidth]{}
\includegraphics[width=.24\textwidth]{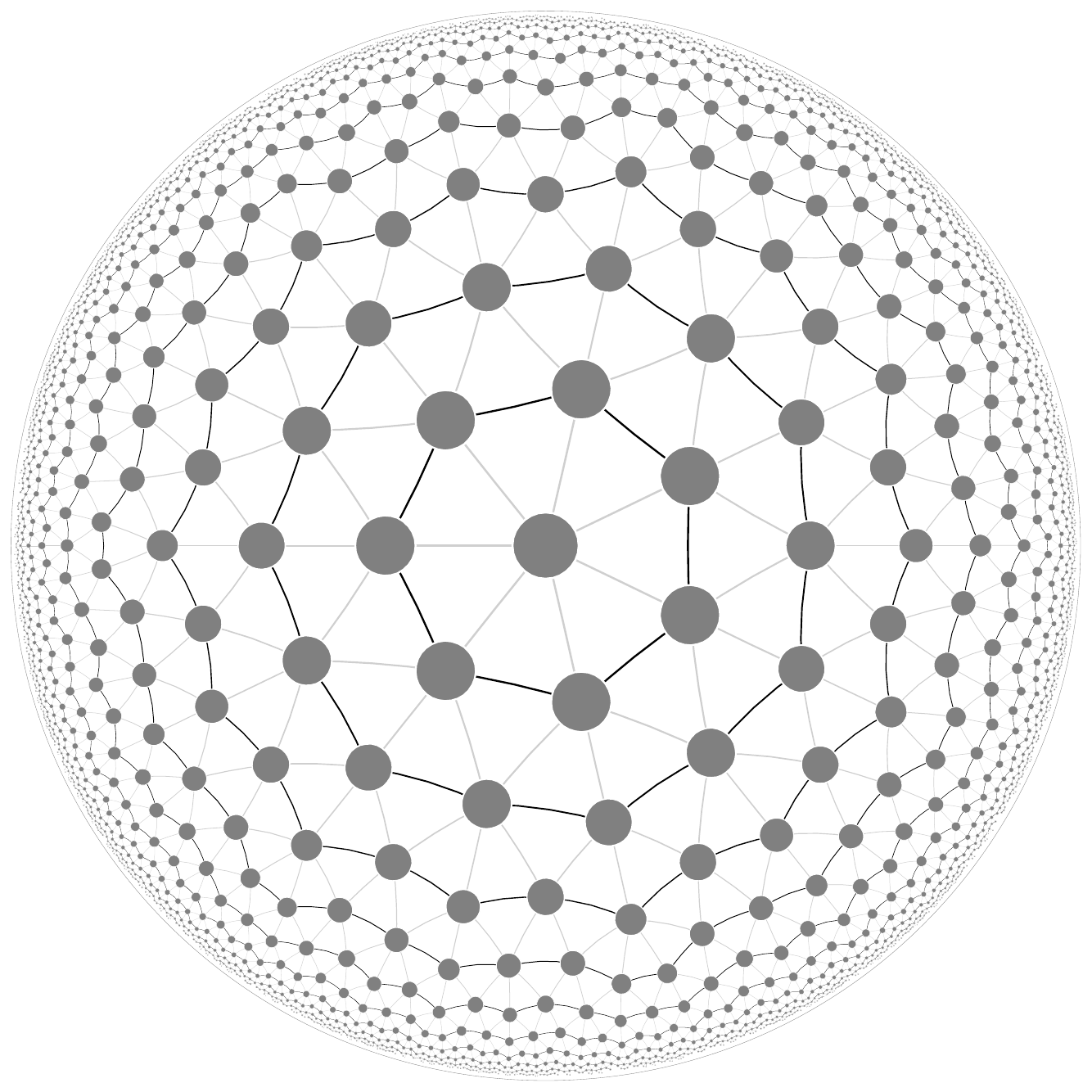}
\end{center}
\caption{\label{figtile}
(a) order-3 heptagonal tiling, (b) the triangulation $\ghoz$,
(c) truncated triangular tiling, (d) the triangulation $\ghoo$.}
\end{figure}

Figure \ref{figtile} shows two tilings of the hyperbolic plane, the order-3 heptagonal
tiling and its bitruncated variant, in the Poincar\'e disk model, together
with their dual graphs, which we call $\ghoz$ and $\ghoo$. In the Poincar\'e
model, the hyperbolic plane is represented as a disk. 
In the hyperbolic metric, 
all the triangles, heptagons and hexagons on each of these pictures are actually of the same size,
and the points on the boundary of the disk are infinitely far from the center.

Recently, hyperbolic geometry has found application in the analysis of scale-free networks,
which are ubiquitous in many fields, from network analysis to biology
\cite{papa}. 
Fix a radial coordinate system in the hyperbolic plane $\bbH^2$, where every point
is represented by two coordinates $(r,\phi)$, where $r$ is the distance from the fixed central
point, and $\phi$ is the angle from the reference direction.

\begin{definition}
\label{def_hrg}
The {\bf hyperbolic random graph model}
has four parameters: $n$ (number of vertices), $R$ (radius), $T$, and $\alpha$. Each vertex $v \in V(H) = \{1,\ldots,n\}$
is independently randomly assigned a point $\map(v) = (r_v, \phi_v)$, where
the distribution of $\phi_v$ is uniform in $[0,2\pi]$, and
the density of the distribution of $r_v \in [0,R]$ is given by 
$f(r) = { {\alpha \sinh(\alpha r)} \over {\cosh(\alpha R)-1}}$.
Then, for each pair of vertices
$v,w\in V(H)$, they are independently connected with probability $p(\dist(\map(v),\map(w)))$,
where $\dist(x,y)$ is the distance between $x,y \in \bbH^2$, and $p(d) = {1 \over 1+e^{(d-R)/2T}}$.
\end{definition}

It is known that, for correctly chosen values of $n$, $R$, $T$ and $\alpha$,
the properties of hyperbolic random graph, such as its degree distribution
or clustering coefficient, are similar to those of real world scale-free networks
\cite{gugelman}.
Perhaps the two most important algorithmic problems related to HRGs are {\it sampling}
(generate a HRG) and {\it MLE embedding}: given a real world network $G$, map the vertices of
$G$ to the hyperbolic plane in such a way that the edges are predicted as accurately as possible.
The quality of this prediction is measured with \emph{log-likelihood},
computed with the formula
$\log L(\map) = \sum_{v<w \in V(H)} \log p_{\{v,w\}\in E}(\dist(\map(v), \map(w)))$,
where $p_\phi(d)=p(d)$ if $\phi$ is true and $1-p(d)$ if $\phi$ is false.
These problems are non-trivial, as we have
to sum over all pairs of vertices (thus an $O(n^2)$ algorithm) just to compute the log-likelihood.
The original paper \cite{papa} used an $O(n^3)$ algorithm. Efficient algorithms have been found
for generating HRGs in time $O(n)$ \cite{gengraph} and for MLE embedding real world scale-free networks
into the hyperbolic plane  in time ${\tilde O}(n)$ \cite{tobias}, which was a major improvement over 
previous algorithms \cite{hypermap,vonlooz}. The algorithm in \cite{tobias}, which we call here the
BFKL embedder, is based on an $O(n)$ method
of approximating the log-likelihood.

Triangulations such as $\ghoo$ and $\ghoz$ from Figure \ref{figtile} can be 
naturally interpreted as metric spaces, where the points are the vertices of the triangulations,
and the distance $\dist(v,w)$ is the number of edges we have to traverse to reach $w$ from $v$.
Such metric spaces have properties similar to the underlying hyperbolic plane;
this similarity is much stronger than in the case of Euclidean triangulations.
In particular, 
hyperbolic shapes such as straight lines, circles, equidistant curves or horocycles have
their natural counterparts in the discrete world with very similar properties.
This similarity can be defined more formally by saying that our triangulations are
{\it Gromov hyperbolic spaces} \cite{gromovhyp}. A metric space is {\it Gromov hyperbolic} iff every 
geodesic triangle is $d$-slim, for some finite $d$. A {\it geodesic} from $u$ to $v$ is a path of length $\dist(u,v)$,
and a geodesic triangle consists of a geodesic $g_{uv}$ from $u$ to $v$, $g_{vw}$ from $v$ to $w$, and $g_{wu}$ from $w$ to $u$.
Such a triangle is $d$-slim iff every point on $g_{wu}$ lies in distance at most $d$ from $g_{uv} \cup g_{vw}$.
Since for trees $d=0$, 
Gromov hyperbolicity (i.e., the value of $d$) can be seen as a measure of tree-likeness.

{\bf Our contribution.} 
We propose a discrete analog of the HRG model, which we call the DHRG model:
in our model, $\map$ maps the nodes to the vertices of a triangulation, and 
the probability of two nodes $v_1,\ v_2$ being connected depends on the graph distance
between the vertices $\map(v_1)$ and $\map(v_2)$.

Such a discrete model lets us use a data structure we call the {\it tally counter}.
The tally counter represents a set $S$ of vertices of a triangulation; we can add
and remove vertices to it, and we can also answer queries of the form {\it for the given vertex $w$,
how many vertices in $S$ are in distance $d$ from $w$,  where $d=0, \ldots, 2R$?}. 
This data structure lets us compute the log-likelihood of a DHRG embedding 
in $O(n+m)$ queries in a straightforward
way, which is an important step in MLE embedders. Furthermore, it
lets us to dynamically remap a vertex $v$ to another location and compute the log-likelihood of the
new embeddding in $O(1+\deg(v))$ queries.

It is well known that many
algorithmic problems can be easily solved on trees; it is also well known that
many graph problems admit very efficient algorithms on graphs that are similar to trees,
where similarity is most commonly measured using the notion of {\it tree width}
\cite{treewidth}. For example, every fixed graph property definable in 
the monadic second order logic with quantification over sets of vertices and edges
($MSO_2$) 
can be checked in linear time on graphs of fixed tree width \cite{courcelle}.
A similar thing happens in our case: tree-likeness of hyperbolic tesselations lets
us to implement all the operations of the tally counter in $O(R^2)$,
while the distance between two vertices can be computed in $O(R)$. Since hyperbolic geometry exhibits exponential
growth, $R$ is typically logarithmic in $n$.



Therefore, we can easily compute the log-likelihood of a DHRG embedding 
in time $O(nR^2+mR)$, where $n$ is the number of vertices and
$m$ is the number of edges; this matches the complexity of the approximation method in the
BFKL embedder \cite{tobias} up to $R^{O(1)}$ factors.
We believe this could be used to create an efficient
MLE embedder, using discrete versions of the methods employed by that embedder; however, this
is an area of further research. For now, we used the available implementation of the BFKL embedder to
produce HRG embeddings, and transformed them to the DHRG model by 
moving every $\map(v)$ to the nearest vertex of the triangulation.
According to our experiments, despite the approximations
introduced by our discretization, our method is
much more accurate than the one used in the BFKL embedder, and it runs in comparable time.
Another benefit of our method is its dynamic remapping property, which lets us 
improve the embeddings using a local search method%
: for every vertex $v$, try to move $\map(v)$
to all its neighbors, and keep the change if it improves the log-likelihood. One
iteration of such local search can be performed in time $O(nR^2+mR)$, and the local search
stabilizes after a small number of iterations, which is a major improvement on the 
$O(n^2)$ spring embedder implemented in the BFKL embedder. Our data structures also allow
to generate DHRGs in time $O(nR^2+mR)$. While our algorithms match the best known algorithms
up to $R^{O(1)}$ factors, we believe they have a significant advantage of simplicity:
the algorithms for distance computation and the tally counter are straightforward, 
especially for theoretical computer scientists who have experience in discrete algorithmics and automata theory
\cite{margenstern_heptagrid} rather than hyperbolic geometry.
Furthermore, efficient local search might be useful on its own \cite{localsearchapp}.

It is worth to note that the major breakthrough in
\cite{gengraph} and \cite{tobias} was achieved by using geometric structures
based on partitioning hyperbolic disks into cells of the binary tiling. This is in some sense similar
to our triangulations. 
However, we believe that avoiding the continuous representations altogether 
and working with more general hyperbolic tesselations than just the binary tiling 
makes our approach more elegant. Hyperbolic triangulations have many other applications, and they are beautiful and
interesting in their
own right. 
Exponential nature of the hyperbolic geometry makes many algorithmic problems challenging 
(for large values of $R$, it is impossible to keep the whole disk of radius $R$ in the memory)
while it proves invaluable in the visualization of hierarchical data \cite{hyptree,Munzner};
mapping vertices of the visualized graph to distinct vertices of a regular triangulation
allows for aesthetically pleasant representations of graphs \cite{hrviz}.
Apart from visualizations, hyperbolic triangulations have been used to create more efficient self-organizing maps (HSOMs)
\cite{hypersom}. 
They also arise naturally when working with bounded degree planar graphs; for example,
many constructions in \cite{planarspectra} are Gromov hyperbolic graphs. 
Hyperbolic geometry is useful in mathematical art and game design \cite{hyperrogue}.
Our algorithms for computing distances in hyperbolic tesselations have found application 
in data vizualization \cite{hrviz} and in the implementation of HyperRogue \cite{hyperrogue},
which we recommend as an intuitive introduction to hyperbolic tesselations and hyperbolic geometry in general.

While the $R^{O(1)}$ factors may be seen as a disadvantage, they are
avoided in \cite{papa,gugelman,tobias} by assuming that operations on floating point numbers are
performed in time $O(1)$. However, any representation of the hyperbolic plane as a tuple of
floating point numbers in a typical 
coordinate system is prone to precision errors. Indeed, the circumference of
a hyperbolic circle of radius $r$ is $2\pi \sinh(r) = \Theta(e^r)$. Therefore, if
we are using $b$ bits for the angular coordinate, two points on the circle
of radius $b \log(2) + \Theta(1)$ will be smashed into a single point, even if
their exact distance is greater than 1. In our approach the vertices are represented
instead as paths from the ``root'' vertex, thus avoiding such precision problems even for
very large values of $R$. Even if we want to perform computations in the continuous hyperbolic plane, 
a ``hybrid'' approach where each point is represented by a vertex of our tesselation
together with the coordinates relative to that vertex is useful to prevent precision
errors. Such approach is used in HyperRogue \cite{hyperdev}.

\vskip 1em

{\bf Structure of the paper.}
In the next section we present the hyperbolic tesselations, and their properties
which will be essential for our algorithms.
Section \ref{sec:distalg} introduces our algorithms for calculating distances in the graph. 
In Section \ref{sec:distcomp}, we study how the distances in our graphs are related
to the distances in the underlying hyperbolic plane. 
We define our DHRG model in Section \ref{sec:scale-free}, based on the intuitions from Section \ref{sec:distcomp}.
We show how to apply our algorithms to
work with DHRGs efficiently in Section \ref{sec:dhrgalgo}. 
We have implemented \cite{explorable} the log-likelihood computation
and local search algorithms presented
in Sections \ref{sec:distalg} and \ref{sec:scale-free}; 
Section \ref{sec:experiments} presents
the experimental results on real world networks.
We discuss possible directions for further work in Section \ref{sec:conclusion}.
We also provide a browser-based interactive visualization of some concepts in this paper \cite{explorable}.

\section{Hyperbolic triangulations}\label{sec:htgrid}

\def\sch#1#2{\{#1,#2\}}
\def\scht#1{\sch{3}{#1}}
\def\schq#1{\sch{4}{#1}}
\def\gp#1#2{GC_{#1,#2}}

In a regular tesselation every face is a regular $p$-gon, and every vertex has degree $q$ (we assume $p,q\geq 3$). We say that such a tesselation has a 
{\bf Schl\"afli symbol} $\sch{p}{q}$. Such a tesselation exists on the sphere iff $(p-2)(q-2)<4$, plane iff $(p-2)(q-2)=4$, and hyperbolic plane iff $(p-2)(q-2) > 4$.
In this paper we are most interested in triangulations ($p=3$) of the hyperbolic plane ($q>6$).

Contrary to the Euclidean tesselations, hyperbolic tesselations cannot be scaled: on a hyperbolic plane of curvature -1, every face in a $\sch{q}{p}$ tesselation, and equivalently
the set of points closest to the given vertex in its dual $\sch{p}{q}$ tesselation, will have area $\pi(q\frac{p-2}{p}-2)$. Thus, among hyperbolic triangulations of the form $\scht{q}$, $\scht{7}$
is the finest, and they get coarser and coarser as $q$ increases.

For our applications it is useful to consider hyperbolic triangulations finer than $\scht{7}$. Such triangulations can be obtained with the {\bf Golberg-Coxeter construction}, which
adds additional vertices of degree 6. 
Consider the $\scht{6}$ triangulation of the plane, and take an equilateral triangle $X$ with one vertex in point $(0,0)$ and another vertex in the point obtained by moving $a$ steps in a
straight line, turning 60 degrees right, and moving $b$ steps more. The triangulation $\gp{a}{b} T$ is obtained from the triangulation $T$ by replacing each of its triangles with a copy
of $X$ \cite{explorable}. Regular triangulations are a special case where $a=1, b=0$. For short, we denote the triangulation $\gp{a}{b} \scht{q}$ with $\gqab$.
Figure \ref{figtile}d shows the triangulation $\ghoo$.

Let $v_0$ be a vertex in a hyperbolic triangulation $G$ of the form $\gqab$. 
We denote the set of vertices of $G$ by $V(G)$.
For $v,w \in V(G)$, let $\dist(v,w)$ be the length of the shortest path from $v$ to $w$.
Below we list the properties of our triangulations which are the most important to us.

\begin{proposition}[rings]
\label{prop_rings}
The set of vertices in distance $k$ from $v_0$ is a cycle.
\end{proposition}

We will call this cycle $k$-th {\bf ring}, $R_k(G)$. 
We assume that all the rings $R_k(G)$ are oriented clockwise around $v_0$. Thus, the
$i$-th successor of $v$, denoted $v+i$, is the vertex obtained by starting from $v$
and going $i$ vertices on the cycle. The $i$-th predecessor of $v$, denoted $v-i$,
is obtained by going $i$ vertices backwards on the cycle. A {\it segment} is the
set $S = \{v, v+1, \ldots, v+k\} \subsetneq R_k(G)$ for some $v \in V$ and $k \geq 0$;
$v$ is called the leftmost element of $S$, and $v+k$ is called the rightmost element
of $S$. By $[v,w]$ we denote the segment such that $v$ is its leftmost element, 
and $w$ is its rightmost element.
For $v,w \in R_k(G)$, let $w-v$ be the smallest $i \geq 0$ such that $w = v+i$.
We also denote $\dist_0(v) = \dist(v,v_0)$. By $B_k(G)$ we denote the $k$-th ball
(neighborhood of $v_0$), i.e.,~$B_k(G) = \bigcup_{i=0,\ldots,k} R_k(G) = \{v \in V | \dist(v,v_0) \leq k\}$.

\begin{proposition}[parents and children]
\label{prop_pch}
Every vertex (except the root $v_0$) has at most two parents and at least two children. 
\end{proposition}

We use tree-like terminology for connecting the rings. A vertex $w$ is a {\bf parent} of
$v$ if there is an edge from $v$ to $w$ 
and $\dist_0(v)=\dist_0(w)+1$; in this case, $v$ is a {\bf child}
of $w$. Let $P(v)$ be the set of parents of $v \in R_k(G)$; it forms a segment of 
$R_{k-1}(G)$, and its leftmost and rightmost elements are 
respectively called the {\bf left parent}
$p_L(v)$ and the {\bf right parent} $p_R(v)$. The set of children
$C(v)$, leftmost child $c_L(v)$ and rightmost child $c_R(v)$ are defined analogously.

\begin{figure}[ht]
\begin{center}
\includegraphics[width=.24\textwidth]{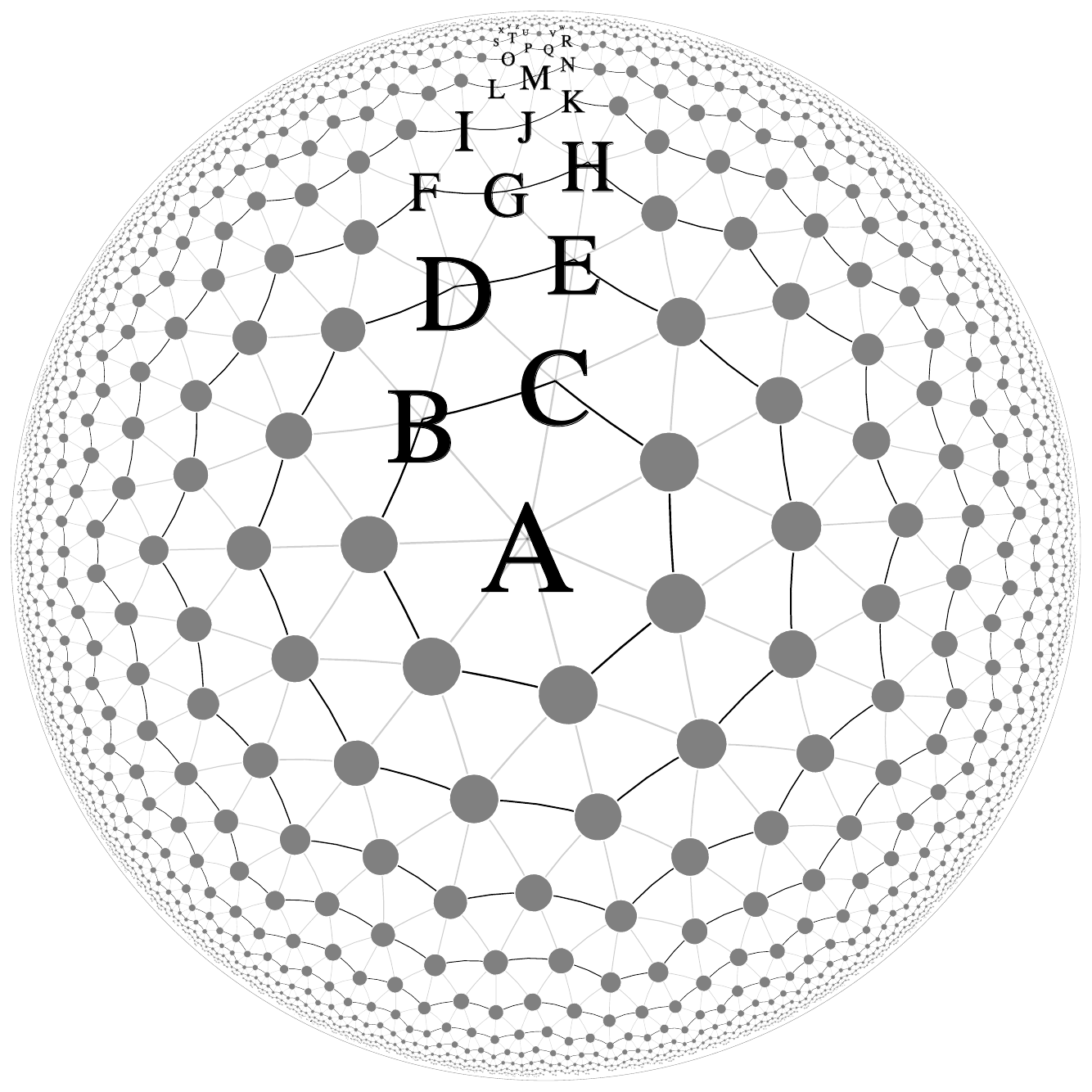}
\includegraphics[width=.24\textwidth]{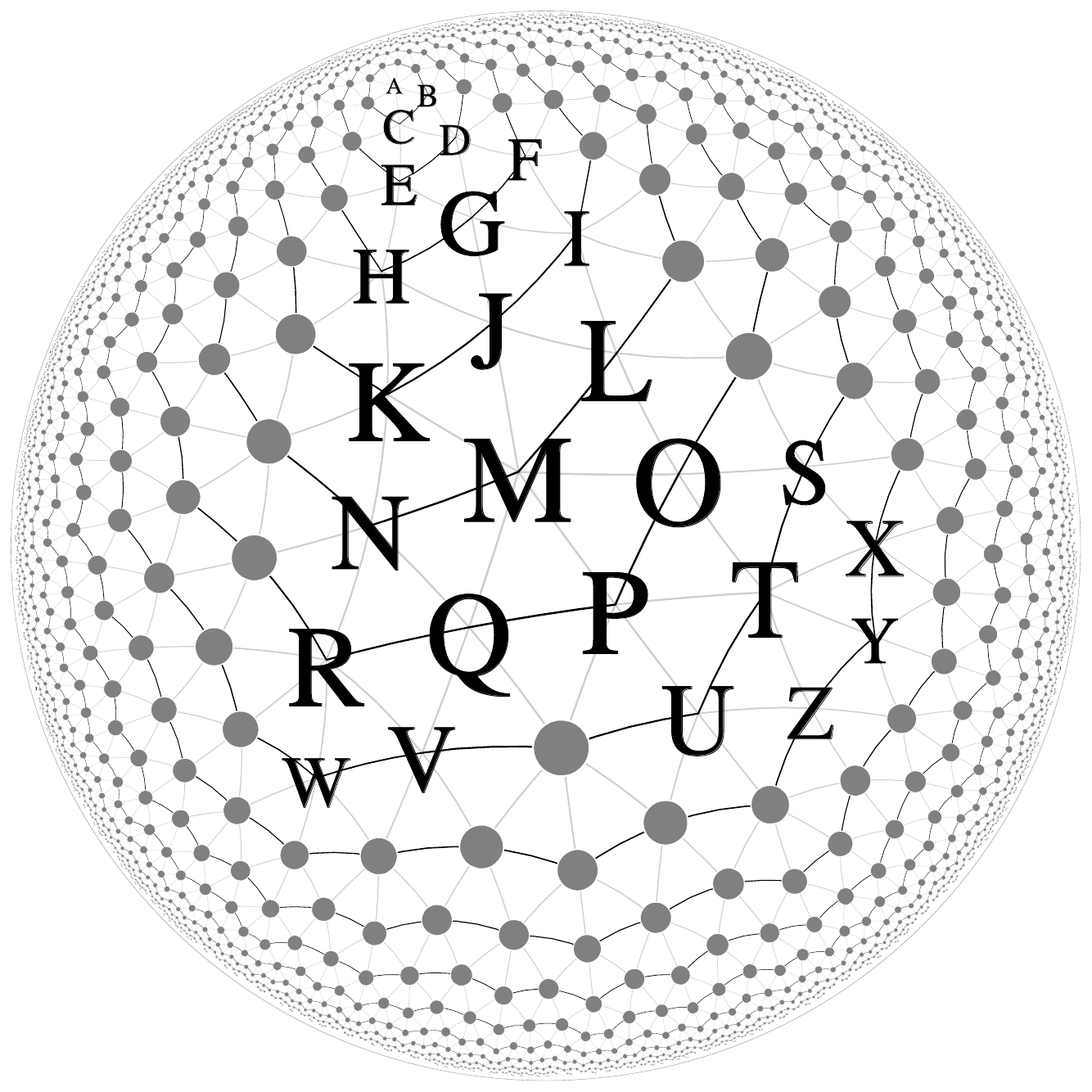}
\end{center}
\caption{\label{figlet} Triangulation $\ghoo$ with labeled vertices, in two perspectives.}
\end{figure}

Figure \ref{figlet} depicts the triangulation $\ghoo$ with named vertices. Both
pictures use the Poincar\'e disk model and show the same vertices, but the left picture is
centered roughly at $v_0$ (labeled with $A$ in the picture), and the right picture is centered at a different
location in the hyperbolic plane. Points drawn close to the boundary of the 
Poincar\'e disk are further away from each other than they appear -- for example,
vertices $T$ and $U$ appear very close in the left picture, yet in fact all
the edges are roughly of the same length (in fact, there are two lengths --
the distance between two vertices of
degree 6 is slightly different than the distance between a vertex of degree 6 and a vertex
of degree 7).

Vertices $X$, $Y$, and $Z$ are the children of $T$; its siblings are $S$ and $U$,
and its parents are $O$ and $P$. The values of $P^k([Y])$ for consecutive values
of $k$, i.e., the ancestor segments of $Y$,
 are: $[Y]$, $[T]$, $[O,P]$, $[L,M]$, $[I,K]$, $[F,H]$, $[D,E]$, $[B,C]$, $[A]$.
Vertex $W$ has just a single ancestor on each level: $R$, $N$, $K$, $H$, $E$, $C$, $A$.
Vertex $V$ has the following ancestor segments: $[Q,R]$, $[M,N]$, $[J,K]$, $[G,H]$, $[D,E]$, 
$[B,C]$, $[A]$. Note the tree-like nature of our graph: $[D,E]$ is the segment of 
ancestors for both $V$ and $Y$, and $[O,P]$ and $[Q,R]$ are already adjacent.
This tree-like nature will be useful in the algorithms in Section \ref{sec:distalg}.

\begin{proposition}[canonical shortest paths]
\label{prop_canonical}
Let $v, w \in V(G)$, and $\dist(v,w) = d$. Then 
at least one of the following is true:

\begin{itemize}
\item $v \in P^d(w)$,
\item $w \in P^d(v)$,
\item $p_R^a(v) + b = p_L^c(w)$, where $a+b+c = d$,
\item $p_R^a(w) + b = p_L^c(v)$, where $a+b+c = d$.
\end{itemize}
\end{proposition}

In other words, the shortest path between any pair of two vertices $(v,w)$ can always be
obtained by going some number of steps toward $v_0$, moving along the ring, and
going back away from $v_0$. The cases where one of the vertices is an ancestor of
the other one had to be listed separatedly because it is possible that $|P^a(v)| > 2$
for $a > 1$, thus $w$ might be neither the leftmost not the rightmost ancestor.
Such a situation happens in $\ghoo$ for the pair of vertices labeled
$(J,O)$ in Figure \ref{figlet}, even though $|P^a(v)| \leq 3$
always holds.

\begin{proposition}[regular generation]
\label{prop_rgen}
There exists a finite 
set of types $T$, a function $c: T \ra T^*$, and an assignment $t: V(G) \ra T$ 
of types to vertices, such that for each $v \in V(G)$, the sequence of types of
all children of $v$ from left to right except the rightmost child is given by $c(t(v))$.
\end{proposition}

By $T^*$ we denote the set of finite words over an alphabet $T$.
The rightmost child of $v$ is also the leftmost child of $v+1$, so we do not include
its type in $c(t(v))$ to avoid redundancy. Our function $c: T \ra T^*$ can be uniquely
extended to a homomorphism $T^* \ra T^*$, which we also denote with $c$, in the following way:
$c(t_1\ldots t_k) = c(t_1) \ldots c(t_k)$. By induction, the sequence of types of non-rightmost vertices
in $C^k(v)$ is given by $c^k(t(v))$.

For regular triangulations $\scht{q}$, the set of types is $T=\{0,1,2\}$, and
the types correspond to the number of parents \cite{explorable}. The root has type 0 and has $q$ children
of type $1$, thus $c(0)=1^q$. For a vertex with $t=1,2$ parents, the leftmost child has
type 2 (two parents), and other non-rightmost children all have type 1. Thus,
we have $c(t)=21^{q-4-t}$. Such constructions for $\sch{3}{q}$ and $\sch{4}{q}$ grids have
been previously studied by Margenstern \cite{margenstern2013small,margenstern_pentagrid,margenstern_heptagrid}.

For $\gp{1}{1}$ triangulations there are 7 types, because 
we also need to specify the degree of vertex $v$ as well as the orientation (the degree of the first child).
For Goldberg-Coxeter tesselations in general we need to identify the position of $v$ in the
triangle $X$ used in the Goldberg-Coxeter construction.

\begin{proposition}[exponential growth]
\label{prop_expgrow}
There exists a constant $\gamma(G)$ such that, for every vertex $v$,
$|C^{k}(v)| = \Theta(\gamma(G)^k)$.
\end{proposition}

Note that, if $c(t(v)) = t_1 \ldots t_n$, the number of non-rightmost vertices in $C^k(v)$ is given by 
$\sum_{i=1}^n |c^{k-1}(t_i)|$. This gives a linear recursive system of formulas for computing $|C^k(v)|$; $\gamma$
is the largest eigenvalue of the respective matrix.
We have $\gamma \approx 2.6180339$ for $\ghoz$ and $\gamma \approx 1.72208$
for $\ghoo$.

\def\tlimit{D(G)}

\begin{proposition}[Gromov hyperbolicity]
\label{prop_shortcut}
There exists a constant $\tlimit$ such that, for every $d > \tlimit$ and $x \in V(G)$, the distance from $x$ to $x+d$ is smaller than $d$.
\end{proposition}

This property gives an upper bound on the value of $b$ in Proposition \ref{prop_canonical}, and thus it
will be crucial in our algorithms computing distances between vertices of $G$.
We call this property Gromov hyperbolicity, because its combination with Proposition \ref{prop_canonical} says that the triangle with vertices in $v_0$,
$v$ and $w$ is slim. Euclidean triangulations do not have this property.


Given the canonicity of shortest paths and regular generation, the value of $D(G)$ can be found with 
a simple algorithm. We have verified experimentally for $a,b \leq 8$ that $D(\gqab) = 2a+b$.

\begin{definition}
A {\bf regularly generated hyperbolic triangulation} (RGHT) is a triangulation which satisfies all the properties
listed above.
\end{definition}

The properties above hold not only for the triangulations of the form $\gqab$. Probably the simplest, though
geometrically less regular, example of a RGHT is obtained by taking a full
infinite binary tree, and additionally connecting each vertex to its cyclic left and
right sibling, and additionally the right child of its left sibling. Such tiling has just one type 
$\star$, and $c(\star) = \star\star$.
This could be seen as a variant
of the binary tiling of the hyperbolic plane. Our algorithms will work with such tilings \cite{explorable}.

There are triangulations where the properties above do not hold; this happens even for face-transitive (Catalan) triangulations.
For example, the triangulation with face configuration V5.8.8 \cite{explorable} has vertices with three parents; this causes the tree-like distance
property to fail (consider a vertex $v$ with 3 parents and the shortest path from the leftmost parent of $v$ to $v+1$). If we
split every face of $\scht{7}$ into three isosceles triangles, we obtain the triangulation with face configuration V14.14.3 \cite{explorable}, where the sets $R_k(G)$ are no longer cycles
(vertices repeat on them), causing the regular generation to fail.
More sophisticated but qualitatively similar variants of our algorithms work for tesselations described above; we expect this to hold for any Gromov hyperbolic triangulations.
We concentrate on the regularly generated case in this paper, because non-regularly generated triangulations are much less useful for all our applications: 
they are much less uniform because of the high variance of degrees and edge lengths.

We can also consider square tilings, i.e., $\gp{a}{b} \schq{q}$ for $q \geq 5$ (Goldberg-Coxeter construction for square tilings is defined
analogously) \cite{explorable}. The major difference here is that the rings $R_k(G)$ are disconnected rather than cycles. However, this only makes our
algorithms simpler: the canonical shortest paths (Proposition \ref{prop_canonical}) no longer have to go across the ring, i.e., $b$ always
equals 0. However, despite the greater simplicity and better performance, square tilings give worse results for the HRG embedding applications. This
is not surprising, as they provide a less accurate approximation of hyperbolic distance.

Our ring structure has a singularity in $v_0$. It is possible to avoid this singularity by changing our construction a bit, by making $R_d(G)$ 
into infinite paths (horocycles) \cite{explorable}. Another possible change to our construction is to connect the last element of $R_d(G)$ with the adjacent element
of $R_{d+1}(G)$, thus putting all the vertices of $G$ in a single spiral \cite{planarspectra}.

\section{Computing distances in hyperbolic triangulations}\label{sec:distalg}

It is not feasible to represent all vertices in, say,
$B_{100}(\ghoo)$ in computer memory -- there are more than $10^{23}$ of them!
However, Proposition \ref{prop_rgen} lets us generate the vertices in our RGHT
lazily. That is, represent our vertices with pointers, start from the root, and
generate other vertices when asked for them. In particular, 
each vertex $v$ is represented with
a pointer to a structure which contains $\dist_0(v)$, the type of $v$, the pointers to
$p_L(v)$, $p_R(v)$, $v-1,$ $v+1,$ $c_L(v),$ and the index of $v$ among the children
of $p_R(v)$; the last three pointers are NULL if the
given neighbor has not yet been computed. Such a structure allows us to compute
all the neighbors of the given vertex in amortized time $O(1)$ for a fixed triangulation.
In this section we show how to compute distances in a RGHT, based on
this data structure.


\begin{theorem}\label{distalgo}
Fix a RGHT $G$. Then $\dist(v,w)$ can be computed 
for $v,w \in G$ in time $O(\dist(v,w))$.
\end{theorem}

\begin{proof}[Proof (sketch)]
The idea of the algorithm is to find the shortest path given in Proposition
\ref{prop_canonical} and limited according to Proposition \ref{prop_shortcut}.
Suppose that $\dist_0(v) = d' + \dist_0(w)$, where $d' \geq 0$.
For each
$i$ starting from 0 we compute the endpoints of the segments $P^{d'+i}(v)$ and
$P^i(w)$. We check whether these segments are in distance at most $\tlimit$ on the ring;
if no, then we can surely tell that we need to check the next $i$; if yes,
we know that the shortest path can be found on one of the levels from $i$ to 
$i+\lfloor \tlimit/2 \rfloor$. We compute the length of all such paths and return the
minimum. 
The full algorithm and the proof of its correctness is given in the Appendix \ref{ommited}.
\end{proof}

It is worth to note that $D(\gshort k11) = 3$ and $D(\gshort k10) = 2$; these RGHTs are most appropriate for our
applications, and our algorithm is very efficient for them.
With some preprocessing, we can optimize to $O(\log \dist(v,w))$ per query
-- precompute $p_L^a(v)$  for each $v \in V$ and $a$ that
is a power of two.

A {\bf distance tally counter} for a graph $G=(V,E)$ represents a modifiable function $f: V \ra \bbR$
with the following operations:

\begin{itemize}
\item Initialize: $f$ is initialized with the constant 0 function
\item Add($v$, $k$): add $k$ to $f(v)$
\item Tally($v$): return an array $A$ such that, for every $d \in \bbN$, 
$A[d] = \sum_{w \in V: \dist(v,w)=d} f(w)$ (if $d$ is out of bounds of $A$, 
we assume that $A[d]=0$)
\end{itemize}

\begin{theorem}\label{powerdistalgo}
Fix a RGHT $G$. A distance tally counter can be implemented working
in memory $O(\sum_{w \in W: f(w) \neq 0} \delta_0(w)^2)$, initialization in time $O(1)$,
and Add($v$) and Tally($v$) in time $O(\dist_0(v)^2)$.
\end{theorem}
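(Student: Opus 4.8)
The plan is to transplant to our tree-like grid the standard method for tallying an $f$-weighted vertex set by distance in a tree, using Proposition~\ref{treelikedist} to pay for the $O(1)$ non-tree moves. Alongside the per-vertex records of Section~\ref{sec:semiregular} we attach to each vertex $u$ three arrays, grown lazily so that a cell is allocated (and zero-initialised) only when first written:
\[
L_u[c]=\!\!\sum_{w:\,p_L^c(w)=u}\!\! f(w),\qquad
R_u[c]=\!\!\sum_{w:\,p_R^c(w)=u}\!\! f(w),\qquad
M_u[c]=\!\!\sum_{w:\,u\in P^c(w)}\!\! f(w),
\]
(not to be confused with the rings $R_k(G)$), together with a dictionary of the finitely many $w$ with $f(w)\neq0$; access to $L_u,R_u,M_u,f(u)$ is then $O(1)$.

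\emph{Add and memory.} To perform \textsc{Add}$(v,k)$ we walk the ancestor segments $P^0(v),P^1(v),\dots,P^{\dist_0(v)}(v)$, each obtained from the previous by one $p_L$ (resp.\ $p_R$) step on its left (resp.\ right) endpoint followed by at most $c_P-1$ successor steps, where $c_P=c_P(G)$ bounds $|P^a(v)|$; this bound holds for $G_7$ and $G_{67}$ (with $c_P\le3$) and, by the contraction behind Proposition~\ref{parentexpansion}, should hold for every uniform hyperbolic grid. At radius $\ell$, with $c=\dist_0(v)-\ell$, we add $k$ to $M_u[c]$ for every $u\in P^{c}(v)$ and to $L_{p_L^{c}(v)}[c]$ and $R_{p_R^{c}(v)}[c]$. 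This writes to $O(\dist_0(v))$ cells, and allocating the array prefixes up to the required length costs a further $O(\dist_0(v))$ per radius, so \textsc{Add}$(v)$ runs in $O(\dist_0(v)^2)$ and allocates that much memory on $v$'s behalf. For the global memory bound, charge each array at a vertex $u$ to the deepest support vertex $w$ of which $u$ is an ancestor: the array has length at most $\dist_0(w)-\dist_0(u)+1$, $w$ has only $O(\dist_0(w))$ ancestors, so $w$ is charged $O(\dist_0(w)^2)$ in total, and summing gives $O\bigl(\sum_{f(w)\neq0}\dist_0(w)^2\bigr)$; the support dictionary is absorbed in the same bound.

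\emph{Tally.} By Proposition~\ref{treelikedist}, every $w$ attains $\dist(v,w)$ either as $|\dist_0(v)-\dist_0(w)|$ (when one of $v,w$ is an ancestor of the other) or through a path that climbs $a$ steps from $v$ to $p_\bullet^a(v)$, moves $b\le t_3$ steps along the ring of radius $\ell=\dist_0(v)-a$, and descends $c$ steps to $w$; in the latter case $w$'s relevant ancestor on that ring is $v_R+b$ (for $p_R^a(v)+b=p_L^c(w)$) or $v_L-b$ (for the mirror configuration), where $[v_L,v_R]=P^{a}(v)$. We therefore run $\ell$ from $\dist_0(v)$ down to $0$, maintaining $[v_L,v_R]$ incrementally, and at radius $\ell$ we (i) only at $\ell=\dist_0(v)$, add $M_v[e]$ to $A[e]$ for all $e$ (descendants of $v$); (ii) add $f(u)$ to $A[a]$ for each of the $\le c_P$ vertices $u\in[v_L,v_R]$ with $f(u)\neq0$ (ancestors of $v$); (iii) for $b=0,\dots,t_3$ (omitting $b=0$ at $\ell=\dist_0(v)$, where it duplicates step (i)) and every $c$, add $L_{v_R+b}[c]$ and $R_{v_L-b}[c]$ to $A[a+b+c]$. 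The double loop over $(\ell,c)$ costs $O(\dist_0(v)^2)$, the $b$-loop and segment maintenance being $O(1)$ per radius.

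\emph{The main obstacle.} Steps (i)--(iii) as written count a vertex $w$ once for every radius at which a valid up--ring--down (or ancestor) configuration for $(v,w)$ exists, whereas $A$ must count it once, at index $\dist(v,w)$. The remedy is the usual inclusion--exclusion over levels, the analogue of subtracting, in the tree algorithm, the contribution of the subtree one has just climbed out of: Proposition~\ref{parentexpansion} forces the gap between the two ancestor segments to be non-increasing as one climbs, so the set of radii at which $w$ is ``captured'' is (essentially) a down-set $\{0,1,\dots,\ell^*_w\}$; one checks that the path length at radius $\ell^*_w$ --- the capture radius nearest $v$ --- equals $\dist(v,w)$, using that for small $t_3$ the path length strictly increases as one descends below $\ell^*_w$; and one replaces each array accessed in step (iii) at radius $\ell$ by its difference with the array(s) consumed at radius $\ell+1$, so that $w$ survives only at $\ell=\ell^*_w$. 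Making this subtraction precise is the delicate point: the ring offset $b$ shifts by an $O(1)$, grid-local amount between consecutive radii, the entries of $L$ and $R$ must be re-indexed accordingly, and a few boundary cases (when $w$ lies on the ring walk itself, or is an endpoint of an ancestor segment) must be reconciled with steps (i)--(ii). For $G_7$ and $G_{67}$ this is routine; a clean treatment for arbitrary uniform hyperbolic grids --- like Propositions~\ref{parentexpansion} and~\ref{treelikedist} themselves --- is among the facts this paper leaves open.
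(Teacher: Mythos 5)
Your proposal follows essentially the same route as the paper: you store, for every added vertex, per-level tallies along its chain of ancestor segments (your endpoint-indexed arrays $L_u, R_u, M_u$ are just a reorganization of the paper's per-segment arrays $a(S)$ attached to active segments and to their leftmost/rightmost vertices), and you answer Tally by climbing the query vertex's ancestor segments and combining with stored tallies at ring offset at most $t_3$, exactly as the paper does via Proposition~\ref{treelikedist}, with the same $O(\dist_0(v)^2)$ accounting. The double-counting issue you flag is likewise left unresolved in the paper (it only remarks that already-counted vertices must not be counted again), so your sketched level-wise correction is, if anything, more detailed than the paper's own proof.
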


\begin{proof}[Proof (sketch)]
A segment is {\bf good} if it is of the form $P^d([v,v])$ for some $v \in V$ and
$d \in \bbN$. Note that the algorithm from the proof of Theorem \ref{distalgo} can be seen as follows:
we start with two segments $[v,v]$ and $[w,w]$, and
then apply the operation $P$ to each of them until we obtain good segments which
are close. Our algorithm will optimize this by representing all the good segments
coming from vertices $v$ added to our structure.

We call a vertex or good segment is called {\it active} if it has been already generated,
and thus is represented as an object in memory.
For each active vertex $v \in V$ we keep two lists $L_L(v), L_R(v)$
of active segments $S$ such
that $v$ is respectively the leftmost and rightmost element of $S$. 
Each active segment $S$ also has a pointer to
$P(S)$, which is also active (and thus, all the ancestors of $S$ are active too), and a
dynamic array of integers $a(S)$. Initially, there are no active vertices or good segments;
when we activate a segment $S$, its $a(S)$ is initially filled with zeros.

The operation Add($v$, $k$) activates $v$, and $S = [v,v]$ together with all its
ancestors. Then, for each $i = 0, \ldots, \dist_0(v)$, it adds $k$ to $a(P^i(S))[i]$.

The operation Tally($w$) activates $w$ and $S = [w,w]$ together with all its ancestors.
We return the vector $A$ obtained as follows.
We look at $p^i(S)$ for $i = 0, \ldots, \dist_0(v)$, and for each $p^i(S)$,
we look at close good segments $q'$ on the same level, baswed on the lists $L_L(w), L_r(w)$
for all $w$ in distance at most $\tlimit$ from $p_i(S)$.
The intuition here is as
follows: the algorithm from Theorem \ref{distalgo}, on reaching $p^{i_1}(v)=S$ and $p^{i_2}(w)=S'$, 
would find out that these two pairs are close enough and return $i_1+i_2+\dist(S,S')$; in our
case, for each $c$ such that $a(S')[c] \neq 0$, we will instead add $a(S')[c]$ to
$A[a_1+\dist(S,S')+c]$. We have to make sure that we do not count
vertices which have been already counted.
\end{proof}


\section{Graph distances versus hyperbolic distances}\label{sec:distcomp}

Let $j: V(G) \ra \bbH^2$ be the function mapping the vertices of our triangulation to their
position on the hyperbolic plane. $j(v)$ can be computed by applying $d=\dist_0(v)$ isometries to $j(v_0)$, with $i$-th
isometry depending only on the type of $p_R^{d-i+1}(v)$ and the index of
$p_R^{d-i}(v)$ among its children.

\begin{intuition}\label{distint}
For $v, w \in V(G)$, let $d = \dist(v,w)$, and $r=\dist(j(v), j(w))$.
Then $d$ and $r$ are approximately proportional. 
\end{intuition}

Stating and proving this intuition formally appears to be challenging, as we have to deal both with
the discrete structure of the triangulation, and the continuous hyperbolic geometry.
From the regularity of our tesselation we get
that $d=\Theta(r)$; we cannot
give a better estimate (e.g., $d=\alpha r+\Theta(1)$)
because the density of rings depends on the direction.
However, we can guess that, on average, $r \approx d \log \gamma$.
This is because, in the hyperbolic plane, the area and circumference
of a circle of radius $r$ given in absolute units is given by $\cosh(r)-1$ and
$\sinh(r)$ respectively, which are $\Theta(e^r)$; from Proposition 
\ref{prop_expgrow} we know that this corresponds to $\Theta(\gamma^d)$ vertices of our graph,
yielding $r \approx d \log \gamma$ after taking the logarithm of both sides.

We can also expect the grid approximation to be better than the corresponding Euclidean one.
Consider the regular triangulation $\geoz$ on the Euclidean plane, in the standard
embedding where every edge has length 1. Let $v=v_0$ and
$W$ be a random vertex in $R_d(\geoz)$. From basic geometry we obtain that
$r \in [\frac{\sqrt{3}}{2}d, d]$. The standard deviation of $r$ will be linear in
$d$, because the ratio $r/d$ depends on the angle between the line $(v_0,W)$ and the
grid lines. However, in the hyperbolic plane, because of the exponential expansion, 
this angle constantly changes as the line $(v_0,W)$ traverses the grid, leading to the
following conjecture:

\begin{conjecture}\label{distconj}
Let $G=\gqab$, and $W \in R_d(G)$ be randomly chosen. Then 
$\dist(j(v_0),j(W)) = c_1 d  + c_0 + X$, where $EX = o(1)$, $\mbox{Var}\ X = \Theta(d)$.
\end{conjecture}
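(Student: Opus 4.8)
The plan is to track a random walk down the grid from $W$ toward $v_0$, and to express the hyperbolic distance $\dist(j(v_0),j(W))$ as a sum of increments along this descent, then apply a martingale (or Markov-chain) central limit theorem. Concretely, pick $W \in R_d(G)$ uniformly; by Proposition \ref{hypgrow} and the density function $g$, the type $t(p_R^k(W))$ together with the child-index of $p_R^{k-1}(W)$ among the children of $p_R^k(W)$ forms (in the limit $d\to\infty$) a stationary ergodic Markov chain $Z_1,\ldots,Z_d$ on a finite state space — this is exactly the data that, by the paragraph preceding the conjecture, determines the $k$-th isometry $\phi_k$ used to build $j(W) = \phi_1\phi_2\cdots\phi_d\, j(v_0)$. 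So $j(W)$ is a product of $d$ i.i.d.-ish (Markov-driven) isometries of $\bbH^2$ applied to a basepoint. The first step is to make this reduction precise: show that the distribution of the string $(Z_1,\ldots,Z_d)$ of a uniformly random $W\in R_d(G)$ converges, in the appropriate sense (e.g.\ in total variation on any fixed-length prefix/suffix window), to the stationary Markov chain, with exponentially small error; this is where the existence of $g$ and the primitivity of the substitution matrix $C$ from Section \ref{sec:semiregular} are used.

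Next I would invoke the theory of random products of isometries of the hyperbolic plane. Because every $\phi_k$ moves the basepoint a bounded amount and the ``drift'' is nonzero (the grid grows, so $\dist_0$ increases by $1$ each step and the geometric displacement is bounded below), the product $\phi_1\cdots\phi_d j(v_0)$ escapes to the boundary linearly, and one has a law of large numbers $\dist(j(v_0),j(W))/d \to c_1$ with $c_1 = \log\gamma$ predicted heuristically in the text (the constant comes from matching exponential growth rates, as in the discussion before the conjecture). The refinement to a CLT is the standard next step: write $\dist(j(v_0), \phi_1\cdots\phi_d j(v_0))$ as a Busemann-type cocycle sum $\sum_{k=1}^d \beta(\phi_1\cdots\phi_k)$ plus a bounded error, where $\beta$ is (asymptotically) the Busemann function at the limit boundary point; this cocycle is a bounded additive functional of the Markov chain, so after centering it is an $L^2$ martingale approximation and the Markov-chain CLT (Kipnis–Varadhan / Gordin) gives $X/\sqrt{d}$ asymptotically normal with $\mathrm{Var}\,X = \Theta(d)$. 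Positivity of the variance has to be checked separately — it fails only if $\beta$ is a coboundary, which would force $r$ to be exactly affine in $d$ along almost every direction; the Euclidean-comparison remark in the text (the direction, hence the ratio $r/d$, genuinely fluctuates) is the intuition for why it is not a coboundary, and one can confirm this by exhibiting two descent-paths to $v_0$ of the same length with different geometric displacement, e.g.\ the ``straight'' and ``zig-zag'' ancestor chains visible in Figure \ref{figlet}.

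The main obstacle, I expect, is the step that converts the geometric quantity $\dist(j(v_0),j(W))$ into a clean additive functional of the finite Markov chain with a controlled remainder. The difficulty is that the Busemann cocycle depends on the limit boundary point, which itself depends on the entire infinite future of the chain, so $\beta(\phi_1\cdots\phi_k)$ is not literally a function of finitely many $Z_i$'s; one must use hyperbolicity (the contraction of $\mathrm{Isom}(\bbH^2)$ acting on the boundary, equivalently Proposition \ref{treelikedist}'s tree-likeness) to show that $\dist(j(v_0),j(W)) - \sum_{k=1}^d \tilde\beta_k$ is bounded in $L^2$ uniformly in $d$, where $\tilde\beta_k$ is a \emph{local} (finite-window) approximation of the cocycle increment with exponentially decaying dependence on far-away coordinates. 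Once that $L^2$-bounded remainder is in hand, everything else is the routine machinery: stationarity and mixing from $C$, the martingale CLT for the local functional, and a separate argument that the asymptotic variance is strictly positive. I would also need to handle the small discrepancy between the random variable $W\in R_d$ (which lives in the actual grid, a finite-level object) and the infinite stationary chain, but by the first step this contributes only an $o(1)$ shift, consistent with the claimed $EX = o(1)$ and the undetermined additive constant $c_0$.
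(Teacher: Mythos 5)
First, a point of calibration: the paper does not prove this statement at all --- it is stated as a conjecture, supported only by experimental verification (and the authors explicitly remark that making the grid-distance vs.\ hyperbolic-distance relation precise ``appears to be challenging''; they even note that the measured $c_1$ is slightly larger than the heuristic value $\log\gamma$). So there is no paper proof to compare against, and your proposal should be judged on its own. Your overall strategy --- realize the ancestor path of a uniform $W\in R_d(G)$ as a (nearly) stationary finite-state Markov chain via the substitution matrix $C$, view $j(W)$ as a Markov-driven product of boundedly many fixed isometries applied to $j(v_0)$, and then run a Busemann-cocycle decomposition plus a Gordin/Kipnis--Varadhan-type CLT --- is the natural and, in my view, the right program, and you correctly identify where the real work lies.

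But as written this is a program, not a proof, and the gaps are exactly at the load-bearing points. (1) The step you flag yourself --- replacing the Busemann cocycle, which depends on the entire future of the path, by local finite-window functionals $\tilde\beta_k$ with an $L^2$-uniformly bounded remainder --- is asserted, not carried out; this is where hyperbolic contraction estimates (or a Benoist--Quint-style argument adapted to Markovian increments) must actually be proved, and nothing in the paper (Proposition \ref{treelikedist} is itself unproven there) can be cited for it. (2) Positivity of the limiting variance: exhibiting two equal-length descent paths with different displacement shows the displacement is not a deterministic function of $d$, but it does not by itself rule out that the centered local functional is a coboundary plus a bounded term; you need a genuine non-degeneracy/non-arithmeticity argument for the cocycle over the stationary chain. (3) Most importantly, the claim $EX=o(1)$ is strictly finer than anything a CLT gives: a CLT yields $E[\dist(j(v_0),j(W))]=c_1 d+o(\sqrt d)$ at best, whereas the conjecture asserts that $E[\dist]-c_1 d$ converges to a constant $c_0$. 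Your $L^2$-bounded remainder has bounded expectation but need not converge, and the discrepancy between the level-$d$ measure and the stationary chain is a separate issue from this. Establishing the $c_0+o(1)$ centering would require an additional quantitative mixing/renewal or transfer-operator argument (exponential convergence of the expected increment and of the remainder's expectation), which your sketch does not supply. Also note that your parenthetical identification $c_1=\log\gamma$ should not be built into the argument: the paper's experiments indicate $c_1>\log\gamma$, and your proof scheme in fact only defines $c_1$ as the drift of the cocycle, which is all that is needed.
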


The results of experimental verification agree with the conjecture for
$\ghoo$, $\ghoz$ and $\gooz$, although $c_1$ is slightly larger than $\log \gamma$
in these cases. While Conjecture \ref{distconj} remains unproven, it is worth to remind that
it is not essential to our work -- our triangulations interpreted as abstract metric spaces exhibit
hyperbolic properties in their own right.


\section{Discrete hyperbolic random graphs}\label{sec:scale-free}

In this section we use our intuitions from the previous section to define the discrete
hyperbolic random graph model (DHRG), the discrete version of the HRG model (Definition \ref{def_hrg}). 

In our model, we map vertices $v \in V(H)$ not to points in the continuous
hyperbolic plane, but to the vertices of our RGHT $G$, 
i.e., $\map:V(H) \ra V(G)$. 
The density function $f(r)$ from the HRG model cannot be reproduced exactly,
but we can use $f(r) = {\alpha e^{\alpha r}}/(e^{\alpha r}-1)$, which is a very good approximation
(it only slightly changes the low probability of placing a vertex very close to the center).

\begin{definition} A {\bf discrete hyperbolic random graph (DHRG)} over the RGHT $G$ with parameters
$n,$ $R,$ $T,$ and $\alpha$ is a random graph $H=(V(H),E(H))$ constructed as follows:
\begin{itemize}
\item The set of vertices is $V(H) = \{1, \ldots, n\}$,
\item Every vertex $v \in V(H)$ is independently randomly assigned a vertex $\map(v) \in B_R(G)$ 
in such a way that the probability that $\map(v)=w$ is proportional to $\frac{e^{d\alpha}}{|R_d(G)|}$,
where $d = \dist_0(w)$;
\item Every pair of vertices $v_1, v_2 \in V(H)$ are independently connected with an edge
with probability $p(\dist(\map(v_1),\map(v_2)))$, where $p(d) = {1 \over 1+e^{(d-R)/2T}}$.
\end{itemize}
\end{definition}

Note that the definition permits $\map(v_1)=\map(v_2)$ for two different vertices $v_1,v_2 \in V(H)$
-- this is not a problem, furthermore, such vertices $v_1$ and $v_2$ are not necessarily connected,
nor do they need to have equal sets of neighbors.

DHRG mappings can be converted to HRG by composing
$\map$ with $j$, and the other conversion can be done by finding the nearest tesselation vertex to $\map(v)$ for each $v\in V(G)$. 
From Conjecture \ref{distconj} we expect the DHRG parameters $\alpha$, 
$R$, and $T$ to be related to the HRG parameters by the factor of
$\log \gamma$.

\begin{theorem}\label{dhrg_powerlaw}
DHRG with parameters $\alpha>\log \gamma/2$, $R$, $T$ and $n$ has a power law degree distribution
with exponent $\beta = 1 + 2(\alpha / \log \gamma)$.
Furthermore, the expected clustering coefficient, average degree, and approximate degree distribution
of a DHRG with given parameters can be computed in time polynomial in $R$. (Proof in the Appendix.)
\end{theorem}

                          



\section{Algorithms for DHRG}\label{sec:dhrgalgo}

We show how the algorithms from Section \ref{sec:distalg} allow us to deal with the DHRG
model efficiently.

{\bf Computing the likelihood.} 
Computing the log-likelihood in the continuous model is difficult, because
we need to compute the sum over $O(n^2)$ pairs; a better algorithm
was crucial for efficient embedding of large real world scale-free networks \cite{tobias}.
The algorithms from the 
previous section allow us to compute it quite easily in the DHRG model.
To compute the log-likelihood of our embedding of a network $H$ with $n$ vertices and
$m$ edges, such that $\dist_0(v) \leq R$ for each $v \in V(H)$, we:

\begin{itemize}
\item for each $d$, compute $\tally[d]$, which is the number of pairs $(v,w)$ such that $\dist(v,w) = d$
-- the distance tally counter allows doing this in a straightforward way
(simply by doing Add($\map(v)$, 1) for each $v \in V(H)$), in time
$O(n R^2)$.

\item for each $d$, compute $\edgetally[d]$, which is the number of pairs $(v,w)$ connected by an edge such
that $\dist(v,w) = d$ -- this can be done in time $O(mR)$ simply by using the
distance algorithm for each of $m$ edges.
\end{itemize}

After computing these two values for each $d$, computing the log-likelihood is
straightforward. One of the advantages over \cite{tobias} is that we can then
easily compute the log-likelihood obtained from other values of $R$ and $T$,
or from a function $p(d)$ which is not necessarily logistic.

{\bf Improving the embedding.} A continuous embedding can be improved by a {\it
spring embedder} \cite{springembedder}. Imagine that there are attractive forces
between connected pairs of vertices, and repulsive forces between unconnected pairs.
The embedding $m$ will change in time as the forces push the vertices towards locations
in such a way that the quality of the embedding, measured by log-likelihood, is improved.
However, computationally, spring embedders are very expensive -- there are $O(n^2)$
forces, and potentially, many steps of our simulation could be necessary.

On the other hand, our algorithms allow to improve DHRG embeddings quite easily.
We use a local search algorithm.
Suppose we have computed the log-likelihood, and on
the way we have computed the vectors $\tally$ and $\edgetally$, as well as the distance
tally counter where every $\map(v)$ has been added. Now, let $v' \in V(H)$ be a vertex
of our embedding, and $w \in V(G)$. Let $\map'$ be the new embedding given by
$\map'(v')=w$ and $\map'(v)=\map(v)$ for $v \neq v'$. Our auxiliary data allows us then to
compute the log-likelihood of $\map'$ in time $O(R^2 + R \deg(w))$. 

This allows us to try to improve the embedding in the following way: in each step,
for each
$v \in V(G)$, consider all neighbors of $\map(v)$, compute the log-likelihood for all
of them, and if for some $\map'$ we have $\log L(\map') > \log L(\map)$, 
replace $\map$ with $\map'$. Assuming the bounded degree of $G$, this can be done in time
$O(R^2n + Rm)$.


{\bf Generating a random graph.} 
Generating large HRGs is not trivial -- a naive algorithm works in $\Theta(n^2)$;
algorithms working in $O((n^{3/2}+m) \log n)$ and $O(n)$ \cite{vonlooz,gengraph} are known.
Our algorithms allow to generate DHRGs quite easily in $O(n R^2+mR)$.

The first step is to generate the vertices. For each vertex $v=\{1, \ldots, n\}$,
we choose $d=\dist_0(v)$ (according to the given distribution), and
then we have to randomly choose $v$ from the $|R_d(v)|$ possibilities. This can
be done iteratively: we create a sequence of vertices $v_0$, \ldots, $v_d$, where
$v_0$ is the root, and $v_{k+1}$ is a non-rightmost child of $v_k$. The probability
of choosing the particular $v$ as $v_k$ should be proportional to $a_{d-k}(v)$,
where $a_i(v) = |C^{d-k}(v)|-1 = | c^{d-k} (t(v)) |$ can be obtained by matrix
multiplication ($O(R)$ preprocessing).

The second step is to generate the edges. This can be done by modifying the 
algorithm computing the vector $\tally[d]$ -- when we add $k$ to $\tally[d]$,
we now also add each of the edges with the probability $p(d)$. Thus, we
need to choose a subset of $S = \{1, \ldots, k\}$ where each element is independently
chosen with probability $p$. $\min S$ has a geometric distribution Geo($p$),
except the cases where $S=\emptyset$ which are represented by Geo($p$)>$k$; assuming
that Geo($p$) can be sampled in $O(1)$, this allows us to generate $\min S$ in time 
$O(1)$, and the rest of $S$ can then be generated in the same way. Then,
trace the elements of $S$ back to their
original vertices, which can be done in $O(R)$ per edge by following the tree of 
active segments back. The whole algorithm works in time $O(n R^2+mR)$, where $n$
is the number of vertices and $m$ is the number of generated edges.

\section{Experimental results}\label{sec:experiments}

We have implemented the log-likelihood and local search algorithms outlined in the
previous section, and conducted experiments on real world network data.
More details are in the Appendix, and the results are included with our implementation \cite{explorable}.

{\bf Facebook social circle network.}
First, we test our model on a relatively small network. We have chosen the Facebook social 
circle network, coming from the SNAP database \cite{snapnets}
and included
with the hyperbolic embedder implementing the algorithm by Bl\"asius et al \cite{tobias},
which we will refer to as BFKL.
This network has $N=4039$ nodes and $M=88234$ edges. 
BFKL has mapped this graph to the hyperbolic plane, using parameters
$R=12.576$, $\alpha=0.755$, $T=0.1$. We have computed the log-likelihood as
$L_1=-516534$. This looks extremely bad at first, as it is worse than the log-likelihood
of the trivial model where each edge exists with probability $M/{N \choose 2}$,
which is $L_0=-487133$; however, this is because the influence of the parameter $T$
on the quality of the embedding is small \cite{hypermap}, and thus BFKL uses a small value of $T=0.1$,
which does not necessarily correspond to the network. The best log-likelihood
of $L_2=-176132$ is obtained for $R_2=11.09358$ and $T_2=0.54336$.

Now, we convert this embedding into the DHRG model, by finding the nearest vertex 
of $\ghoo$ for each $v \in V(H)$. The best log-likelihood $L_3=-179125$ is obtained for 
$R_3=20.39395$ and $T_3=1.01295$;
as predicted in Section \ref{sec:scale-free}, $T_2/T_3 \approx R_2/R_3 \approx \log \gamma$. 
Our log-likelihood
$L_3$ is slightly worse than $L_2$, but this is not surprising -- first, our edge
predictor has lost some precision in the input because of the discrete nature of
our tesselation, and second, the original prediction was based on the hyperbolic distance $r$ while our prediction
is based on the tesselation distance $d$, and the ratio of $r$ and $d$ depends on the direction. We also
compute the log-likelihood obtained by a model where the edge probability is
$p(d) = \edgetally[d]/\tally[d]$, which corresponds to using the best possible 
function $p(d)$ (not necessarily logistic); we obtain $L_4=-177033$, which is only slightly
better than $L_3$. This shows that the logistic function is close to the optimum.

Now, we try our local search algorithm. The points stopped moving in the $k$-th iteration, for $k=22$.
This allows us to improve the log-likelihood of $L_5=-167991$, again for the best values
of $R_5=20.710576$ and $T_5=0.964954$, and the optimal log-likelihood to $L_6=-165338$. 

Now, we convert our mapping back to the HRG model, obtaining the log-likelihood of
$L_7=-168445$ for the optimal values of $R_7=11.17756$ and $T_7=0.52578$. Note that $L_7$ is
significantly better than $L_2$; hence, despite converting from HRG to DHRG and back,
our method was successful at finding a better continuous embedding.

The running time
of parts of our algorithm were: $t_1$=0.4 s (converting),
$t_2$=0.067 s (computing $\edgetally$), $t_3$=0.031 s (computing $\tally$),
$t_4$=40 s (local search). The BFKL embedder computes the log-likelihood in 0.3 seconds, which
is comparable. However, their spring embedder working in quadratic time is much slower
than our local search.\footnote{For $T=0.54336$ and seed 123456789
the BFKL spring embedder reported the log-likelihood of -131634,
which is better than ours; however, our implementation reports $L_1=-211454$
and $L_2=-174465$, which our local search still manages to improve to $L_7=-157026$.
This appears to to be a problem in their approximation
(which also affects the fast embedder, and smaller values of $T$).
Indeed, replacing their optimized log-likelihood function with a $\Theta(n^2)$ 
one from {\tt hyperbolic.cpp} reports log-likelihood equal to ours.
[Actually, it reports double our result, but this seems to be caused by counting
each pair of vertices twice, which is easy to fix and irrelevant for the optimized embedder.]
}


The respective values obtained on $\ghoz$ were: 
$t_1=0.5s$, $t_2=34 ms$, $t_3=19ms$, $k=29$, $t_4=22s$,
$L_3=-182721$, $L_4=-188134$, $L_5=-170074$, $L_6=-168006$, 
$L_7=-170886$. $\ghoz$ is coarser than
$\ghoo$, hence it is not surprising that its results are slightly worse; 
also the smaller size and greater simplicity of $\ghoz$ improves the running time.
Yet, the general qualitative effects are similar. Using finer triangulations such as
$\gshort 753$ yields minor improvements in the resulting log-likelihood at the cost
of significant performance downgrade, due to the increase in the values of $R$ and
$\tlimit$.


{\bf GitHub following graph.}
To benchmark our algorithm on a large network,
we study the embedding of a social network observed in GitHub repository hosting service. 
In GitHub convention, \emph{following} means that a registered user
agreed to be sent notifications about other user's activity within the service.
This relationship can be represented by the means of the graph of following $\mathcal{G}_f$.
There is an edge in $\mathcal{G}_f$ between A and B iff A follows B.
Decision about following a particular user can be simultaneously driven by their
popularity within the network and the similarity to the interested user, which suggests
hyperbolic geometry can be intrinsic in the development of $\mathcal{G}_f$. $\mathcal{G}_f$ 
was also proved to show power-law-like scale behavior \cite{euromed}, that is why we
believe it is a sound benchmark for our analysis. Since the complete download of GitHub
data is impossible, our dataset is combined from two sources: GHTorrent project
\cite{ght} and GitHubArchive project \cite{gha}. The analyzed network contains 
information about the following relationships that occurred in the service from 2008 to 
2009.

The graph has $n$=74946 vertices and $m$=537952 edges
 (since we are working with
an undirected graph, an edge appears between A and B if either A follows B or B follows
A). The BFKL embedder has chosen
parameters $R=20.9037$ and $\alpha=0.855$, and computes the log-likelihood in
5 seconds.
The results for $\ghoz$ are as follows: $t_1=12$ s, $t_2=2$ s, $t_3=0.5s$,
$L_0=-4364526$, $L_3=-3976515$, $T_3=1.398666$, $R_3=9.063012$,
$L_4=-3859688$. After 6 iterations of local search (25s each) the results have been improved to
$L_5=-3571941$, $L_6=-3542740$; after 100 iterations the results are only slightly better,
at -3545664 and -3527397. 
The time $t_2+t_3$ is still comparable
to BFKL.\footnote{As with the smaller graph, we suspect that our value is more
accurate than BFKL.} Using an even coarser 
$\gooz$ reduces the running time per iteration by about $\frac{1}{4}$, without 
a significant reduction in quality.




\section{Conclusion}\label{sec:conclusion}

We have shown efficient algorithms for computing the distances between points in
regularly generated hyperbolic triangulations, and distances between a given point and a set of points.
We have shown how to apply these algorithms to work with the DHRG model efficiently, 
and how our DHRG model can be used to improve the results of the BFKL embedder.
Creating a DHRG embedder is an direction of further research; we believe that
the ideas underlying the BFKL embedder could be applied to the DHRG case.
It is also interesting to what extent our algorithms for RGHTs can be generalized to wider classes of hyperbolic graphs, such
as graphs with Gromov hyperbolicity $\delta$ \cite{gromovhyp}.



We are very grateful to the anonymous referees for their careful reading 
of an earlier version of this work.
Many parts of the paper have been greatly improved as a result of their insightful and constructive
comments. 

\bibliography{mybib}

\newpage
\appendix

\section{Omitted proofs}\label{ommited}

\begin{figure}[ht!]
\begin{enumerate}
\def\i{\item}
\def\iz{\item}
\def\cindent{\hskip 0.5cm}

\iz {\bf function} distance$(v_1, v_2)$:
\i  \cindent {\bf for} $i \in \{1,2\}$:
\i  \cindent \cindent $l_i := v_i$
\i  \cindent \cindent $r_i := v_i$
\i  \cindent \cindent $d_i := \dist_0(v_i)$
\i  \cindent \cindent $a_i := 0$
\i  \cindent {\bf function} push($i$):
\i  \cindent \cindent $a_i := a_i+1$
\i  \cindent \cindent $d_i := d_i-1$
\i  \cindent \cindent $l_i := p_L(l_i)$
\i  \cindent \cindent $r_i := p_R(r_i)$
\i  \cindent {\bf while} $d_1 > d_2:$
\i  \cindent \cindent push(1)
\i  \cindent {\bf while} $d_2 > d_1:$
\i  \cindent \cindent push(2)
\i  \label{easystart} \cindent {\bf for} $i \in \{1,2\}$ {\bf if} $v_i \in [l_i, r_i]:$
\i  \label{easyend} \cindent \cindent \cindent return $a_{3-i}$
\i  \cindent $d := \infty$
\i  \label{mainloopstart} \cindent {\bf while} $a_1 + a_2 < d$:
\i  \cindent \cindent {\bf for} $i \in \{1,2\}$ {\bf for} $k \in \{0,\ldots,t_2\}$ {\bf if} $l_i = r_{3-i} + k:$
\i  \cindent \cindent \cindent $d := \min(d, a_1 + a_2 + k)$
\i  \cindent \cindent push(1)
\i  \label{mainloopend} \cindent \cindent push(2)
\i  \cindent {\bf return} $d$
\end{enumerate}
\caption{\label{distalgo_psc}
Pseudocode of the algorithm from Theorem \ref{distalgo}.}
\end{figure}

\begin{proof}[Proof of Proposition \ref{prop_canonical}]
Let $v, w \in V(G)$ for a triangulation $G$ satisfying the previous properties.
Let $(v=v_0, v_1, v_2, \ldots, v_d=w)$ be a path from $v_0=v$ to
$v_d=w$ of length $d$. We will show that a path from $v$ to $w$
exists which is of the form given in Proposition \ref{prop_canonical}
and is not longer than $d$.

In case if $v \in P^d(w)$ or $w \in P^d(v)$, the hypothesis trivially
holds, so assume this is not the case.

\def\ras#1{\stackrel{#1}{\ra}}

\def\ca{{v_i}}
\def\cb{{v_{i+1}}}
\def\cc{{v_{i+2}}}

Each edge from $\ca$ to $\cb$ on the path is one of the following types:
right parent, left parent, right sibling, left sibling,
right child (inverse of left parent, i.e., any non-leftmost child), left child
(inverse of right parent, i.e., any non-rightmost child).
We denote the cases as respectively $\ca \ras{RP} \cb$,
$\ca \ras{LP} \cb$, $\ca \ras{RS} \cb$, $\ca \ras{LS} \cb$, 
$\ca \ras{RC} \cb$, $\ca \ras{LC} \cb$. 
We use the symbols $x,y$
if we do not care about the sides.

If $\ca \ras{xC} \cb \ras{yP} \cc$, then we can make the path shorter
($\ca$ and $\cc$ are both children of $\cb$ and thus they must be the
same or adjacent).

If $\ca \ras{xS} \cb \ras{yP} \cc$, then let $u$ be such that $\ca \ras{yP} u$.
Either $u=\cc$ or $u$ is adjacent to $\cc$, so we can replace this situation
with $\ca \ras{yP} \cc$ or $\ca \ras{yP} u \ras{zS} \cc$, without making the
path longer. The case $\ca \ras{yC} \cb \ras{xS} \cc$ is symmetric.

Therefore, all the $xP$ edges must be before all the $xS$ edges, which must
be before all the $xC$ edges. Furthermore, clearly all the $xS$ edges must
go in the same direction -- two adjacent edges moving in opposite directions
cancel each other. 

We will now show that all the edges have to go in the same direction (right 
or left). This direction will be called $m \in \{L,R\}$. There are three cases:
\begin{itemize}
\item there are $xS$ edges -- if they do not all go in the same direction,
then two adjacent ones moving in the opposite directions cancel each other,
so we can get a shorter path by removing them. Otherwise, let $m$ be the
common direction.
\item there are no $xS$ edges, and the vertex between $xP$ edges and $xC$
edges is the root -- in this
case, we get from $v$ to the root using $a$ parent edges, and then from 
the root to $w$ using $c$ child edges. If we replace the first $a$ edges
with right parent edges, we still get to $v_0$; symmetrically, we replace
the last $c$ edges with right child edges.
\item there are no $xS$ edges, and the vertex between $xP$ edges and $xC$
edges is $v_i$ which is not the root -- then, the main direction is $R$
iff $v_{i-1}$ is to the left from $v_{i+1}$ among the children of $v_i$,
and $L$ otherwise.
\end{itemize}

Now, we can assume that all the edges in the $xC$ go in the same
direction (i.e., they are $mC$
edges). Indeed, if this is not the case, let $m'$ be the opposite of $m$,
and take the last $m'C$ edge: $\ca \ras{m'C} \cb \ras{m?} \cc$. In all
cases, let $u$ be such that $\ca \ras{mC} u$. By case by case analysis,
we get that $\ca \ra u \ra \cc$ the path is either shorter (i.e., $u=\cc$)
or pushes the $m'C$ edge further the path. Ultimately, we get no $m'C$
edges in the $xC$ part. By symmetry, we also have no $m'P$ edges in the
$mP$ part.

Therefore, our path consists of $a$ $mC$ edges, followed by $b$ $mS$ edges,
followed by $c$ $mP$ edges. This corresponds to the last two cases of
Proposition \ref{prop_canonical} (depending on whether $m$ is $R$ or $L$),
therefore proving it.
\end{proof}

\begin{proof}[Proof of Proposition \ref{prop_shortcut}]
We will show how to compute $\tlimit$ algorithmically based on the previous properties.
We initialize the lower bound on $\tlimit$ to 0, and call 
the function {\tt find\_sibling\_limit($v_1$, $v_2$)} for every pair of vertices in $R_1(G)$.
That function compute $v_2-v_1$, and check whether it is smaller than
the length of a path which goes through lower rings; if yes, we update our lower bound on $\tlimit$.
Then, {\tt find\_sibling\_limit} calls itself recursively for every $(w_1, w_2)$ where $w_1$ which is non-rightmost child of $v_1$
and every $w_2$ which is non-rightmost child of $v_2$.

This ensures that every pair of vertices is checked. Of course, this is infinitely many pairs.
However, recursive descent is not necessary if:
\begin{itemize}
\item there is a vertex in the segment $[v_1+1, v_2-1]$ which produces an extra child in every generation.
\item another pair $(v_1, v_2)$ previously considered had the same sequence of types of vertices in $[v_1, v_2]$,
and the same distances from $v_1$ to $v_2-1$ and from $v_1$ to $v_2$ (the results for any pairs of the descendants
of the current pair would be the same as the results for the respective pairs of descendants of the earlier pair).
\end{itemize}

This algorithm is implemented in {\tt regular.cpp} \cite{explorable}. Proposition \ref{prop_shortcut} can be verified
for the given triangulation by running this algorithm or by manual case-by-case analysis.
\end{proof}


\begin{proof}[Proof of Theorem \ref{distalgo}] \rule{0cm}{0cm}

The pseudocode of our algorithm is given in Figure \ref{distalgo_psc}.
It uses five integer variables $a_i,d_i,d$ and four 
vertex variables $l_i$, $r_i$ ($i=1,2$). Variables $a_i$, $d_i$, $l_i$ and $r_i$
are modified only by the function push($i$), which lets us keep the following
invariant: $\dist_0(l_i) = \dist_0(r_i) = d_i$, $l_i = p_L^{a_i}(v_i)$, 
$r_i = p_R^{a_i}(v_i)$.

The lines (\ref{easystart}-\ref{easyend}) deal with the first two cases of the
Proposition \ref{prop_canonical}.

The main loop in lines (\ref{mainloopstart}-\ref{mainloopend}) deals with the last two cases. At all times
$d$ is the currently found upper bound on $\dist(v,w)$. It is easy to check that
the specific shortest path given in Proposition \ref{prop_canonical} will be 
found by our algorithm.

Every iteration of every loop increases $a_1$ or $a_2$, and an iteration can occur
only if $a_1+a_2 < \dist(v,w)$. Therefore, the algorithm runs in time $O(\dist(v,w))$.
An implementation is available (see Appendix \ref{app_impl}, file {\tt segment.cpp}).
\end{proof}

\begin{proof}[Proof of Theorem \ref{powerdistalgo}]
We have not written down the pseudocode nor the proof,
but an implementation is available (see Appendix \ref{app_impl}, file {\tt segment.cpp}).
\end{proof}

\begin{proof}[Proof of Theorem \ref{dhrg_powerlaw}]

{\bf Power law.}

Take a DHRG $V$ with parameters $N$, $\alpha$, $R$, and $T$. Let $X$ be the degree
of a random vertex of $V$. We have to show that $P(X>x) = \Theta(x^{-\beta+1})$.

A random vertex will be
in ring $j$ with probability $\Theta(c^{R-j})$, where $c = e^{-\alpha} < 1$. 

Let $\gamma$ be the growth constant of our RGHT $G$. 
Take two random points $v_1$, $v_2$ from $R_d(G)$. What is the distance between
$v_1$ and $v_2$?

Let $l$ be the distance between $v_1$ and $v_2$ along the cycle, i.e., $v_1+l = v_2$.
From regularity, the distance between $P^k(x_1)$ and $P^k(x_2)$ is then 
$l/\gamma^k + O(1)$. Thus, the algorithm from Theorem \ref{distalgo} will stop after
$s = \log_\gamma(l) + O(1)$ steps, and return $2s + O(1)$.

We could view this as follows: the distance between two random points from $R_d(G)$ is
$2d - 2\min(B,d) + O(1)$, where $B$ has geometric distribution with parameter $1/\gamma$;
intuitively, $B$ corresponds to the length of the common branch of the pathes from $v_0$
to $v_1$ and $v_2$. This formula extends to random points from different rings:
$d(d_1, d_2, B) = d_1 + d_2 - 2\min(B,d_1,d_2) + O(1)$. 

Now, take a DHRG with parameters $N$, $\alpha$, $R$, and $T$. A random vertex will be
in ring $j$ with probability $\Theta(c^{R-j})$, where $c = e^{-\alpha} < 1$. 

We will first consider the step model, where two vertices are connected iff their
distance is $\leq R$. Let $p(d_1, d_2)$ be the probability
that $d(d_1, d_2, X) \leq R$. If $d_1 + d_2 > R$ and $|d_1 + d_2| \leq R$, we must have
$B > (R-d_1-d_2)/2$, thus $p(d_1, d_2) = \Theta(\gamma^{h(d_1+d_2-R)})$ where $h = -1/2$.
The expected degree of a vertex in $R_i(G)$ is:

\begin{eqnarray*}
x(i)/N & = & \sum_{j=0}^{R} \Theta(c^{R-j}) p(d_1, d_2) \\
       & = & \sum_{j=0}^{R-i} \Theta(c^{R-j}) + \sum_{j=R-i+1}^{R} \Theta(c^{R-j}) \Theta(\gamma^{(i+j-R)/-2}) \\
       & = & \Theta(c^R) + \Theta(c^i) + \Theta(c^i) + \Theta(\gamma^{hi}) \\
       & = & \Theta(\max(c,\gamma^h)^i).
\end{eqnarray*}

To take $T$ into account, we simply have to consider that points in distance $R+k$ are connected with
probability $\Theta(q^k)$ for $k \geq 0$ (and $\Theta(1)$ for $k < 0$). Thus, we have replace $p(d_1, d_2)$ with
$\sum_{k=0}^{R-d_1-d_2} q^k \gamma^{h(R-d_1-d_2-k)} = \Theta(\max(\gamma^h,q)^{R-d_1-d_2})$, obtaining
$x(i) = N \Theta(\max(c,\gamma^h,q)^i)$, which is $\Theta(N \gamma^{hi})$ if $\gamma^h>c,q$.
                                                                        
The probability that a random point has degree greater than $x$ is then on the order of probability that
$i < \log_{\gamma^h}(x/N)$, which is $c^{\log_{\gamma^h}(x/N)} = (x/N)^{\log_xi c}$. This proves our
hypothesis with $\log_{\gamma^h} c = -\beta + 1$,
thus $\beta = 1 - \log_{\gamma^h} c = 1 + \alpha / \log \gamma^h = 1 + (\alpha / h\log \gamma) = 1 + 2(\alpha / \log \gamma)$.

We believe that a similar reasoning could be used to theoretically obtain the expected clustering coefficient.
However, the computations are much more complicated
(the three values of $B$ corresponding to each pair of points in the triplet are not independent). \qed

{\bf Algorithm to compute the expected average degree, degree distribution and clustering coefficient of a DHRG.}

For a segment $S$, let $Z^d(S)$ 
be the set of vertices $v$ such that $P^d(v) = S$. For $d>0$, we can compute $|Z^d(S)|$ by considering all the possible 
segments $S'$ such that $P(S') = S$, and summing $Z^{d-1}(S)$ over them. Let the {\it type} of the segment be the
sequence of types of vertices in it (as in the definition of RGHT); $|Z^d(S)|$ depends only on $d$ and the
type of $S$, and there are only finitely many types, so $|Z^d(S)|$ can be computed in $O(d)$ using recursion with
memoization.

Now, let $f(S_1, S_2, d_1, d_2, d)$ be the number of pairs $(v_1 \in Z^{d_1}(S_1), v_2 \in Z^{d_2}(S_2))$ such that 
$\delta(v_1, v_2) = d$. When $S_1$ and $S_2$ are far enough, or $d=0$, we can immediately tell whether $v_1$ and $v_2$ will be in distance $d$;
if yes, the result is $|Z^{d_1}(S_1)| \cdot |Z^{d_2}(S_2)|$, otherwise it is 0. Otherwise we can do recursive computation
in similar way as in the previous paragraph.

By setting $S_1 = S_2 = \{v_0\}$ we obtain the number of pairs of vertices $(v_1, v_2)$ such that $v_1 \in R_{d_1}(G)$,
$v_2 \in R_{d_2}(G)$, and $\delta(v_1,v_0) = d$. Using this information we can easily compute the expected degree of a
random vertex $v \in R_d{G}$, and thus get an approximate expected degree distribution in the DHRG; this is approximate because
the actual expected degree of $v$ depends not only on $d$, but also on the path from $v_0$ to $v$.
An implementation is available (see Appendix \ref{app_impl}, file {\tt dynamic.cpp}).

The clustering coefficient can be computed in a similar way, but we have to consider triplets of points. 
\end{proof} 

\section{Implementation}\label{app_impl}

The source code, data, and experimental results are
available at the following address:

\begin{center}
\url{http://www.mimuw.edu.pl/~erykk/dhrg/dhrg-v5.tgz}

md5sum: {\tt cf73cd3045f37dfbf73d16b143e4eaa6}
\end{center}
 
Here v5 represents the version at the time of this submission.

The following elements are included:

\begin{itemize}
\item {\tt rogueviz} and {\tt src} -- 
implementation of the algorithms and data structures from this paper. This 
builds on RogueViz, which is a hyperbolic visualization/analysis engine based on HyperRogue \cite{hyperrogue}.
RogueViz implements:
\begin{itemize}
\item regular generation of $\scht{q}$ and $\schq{q}$ grids (heptagon.cpp)
\item Goldberg-Coxeter construction (goldberg.cpp)
\item computing types of vertices for regular generation, the function $c:T \ra T^*$,
computing the growth factor $\gamma$ based on $T$ and $c$, and computing the 
distance based on Algorithm \ref{distalgo} (expansion.cpp)
\item mapping tesselation vertices to the hyperbolic space and vice versa
\item visualization engine
\end{itemize}             
src implements algorithms discussed in this paper:
\begin{itemize}
\item an algorithm to compute $\tlimit$ (regular.cpp)
\item RGHT structure as used in this paper (mycell.cpp)
\item segments, and implementation of the algorithm from Theorem \ref{distalgo} (segment.cpp)
\item log-likelihood analysis and local search to improve embedding (loglik.cpp, embedder.cpp)
\item distance algorithm mentioned in the proof of Theorem \ref{dhrg_powerlaw} (dynamic.cpp)
\item a function to test Conjecture \ref{distconj} (gridmapping.cpp)
\end{itemize}
\item {\tt embedded-graphs} -- embedded graphs. This includes the FIT 2017 coauthorship network 
(a very small network for quick testing and visualization), GitHub following networks from 2009 and 2011, 
and some of the networks that the BFKL embedder was benchmarked on (Facebook, Amazon, Slashdot).
\item {\tt results} -- detailed results of the local search on various graphs, and of 
gridmapping.cpp and dynamic.cpp.
\item {\tt oldresults} -- results from an older version which have not yet 
been recomputed with the newer version. Note that the newer version uses much less memory.
\item {\tt web} -- a copy of the browser-based interactive visualization.
\end{itemize}

Look at the Makefile to see how to obtain various targets.
Run {\tt make visualize} to visualize the local search process on the FIT network.
Press WASD or left-click to move around, / to display the statistics, 
display log-likelihood, manually move the vertices to see the effect on the log-likelihood, and
execute iterations of the algorithm.

\begin{figure}[ht]
\begin{center}
\includegraphics[width=.26\textwidth]{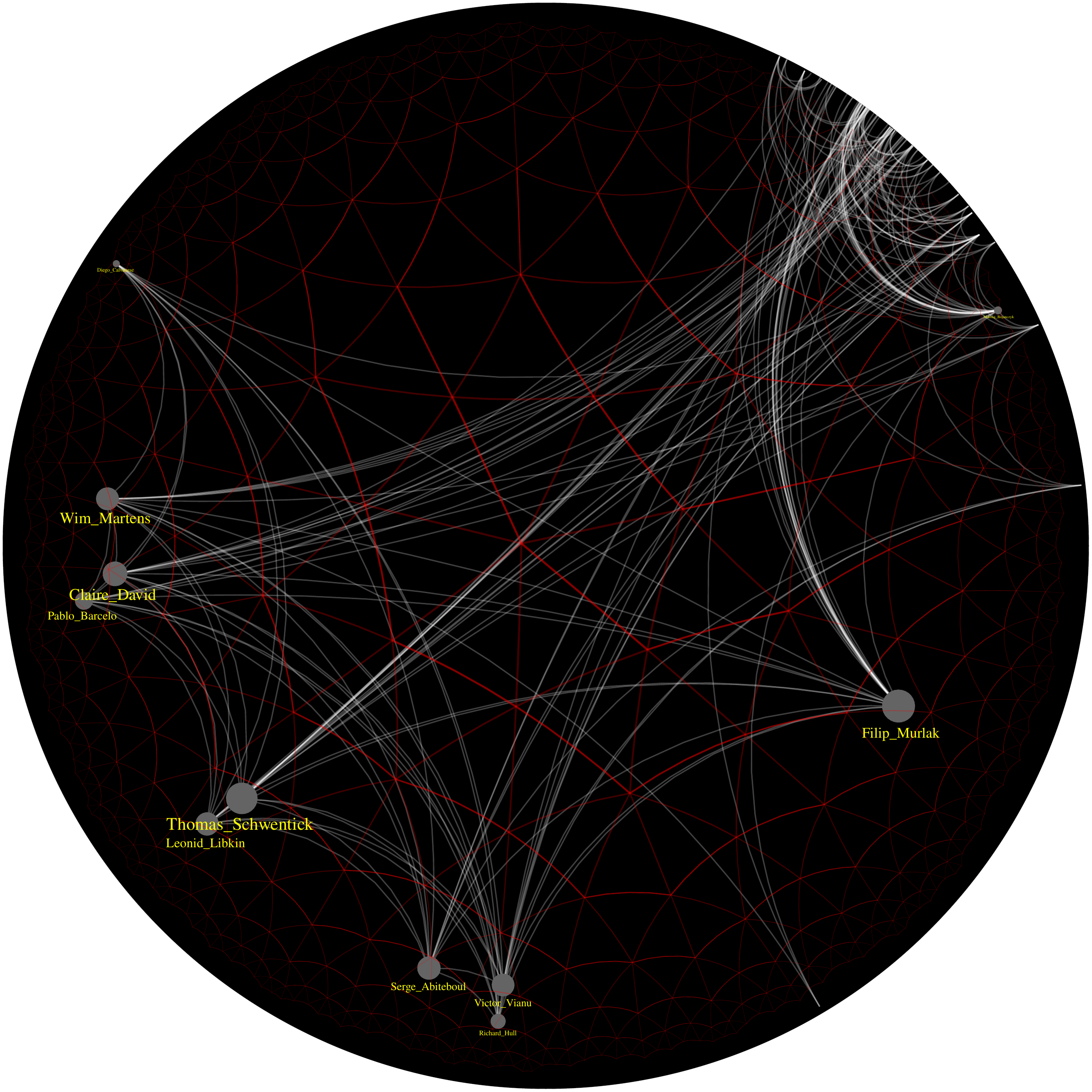}
\includegraphics[width=.26\textwidth]{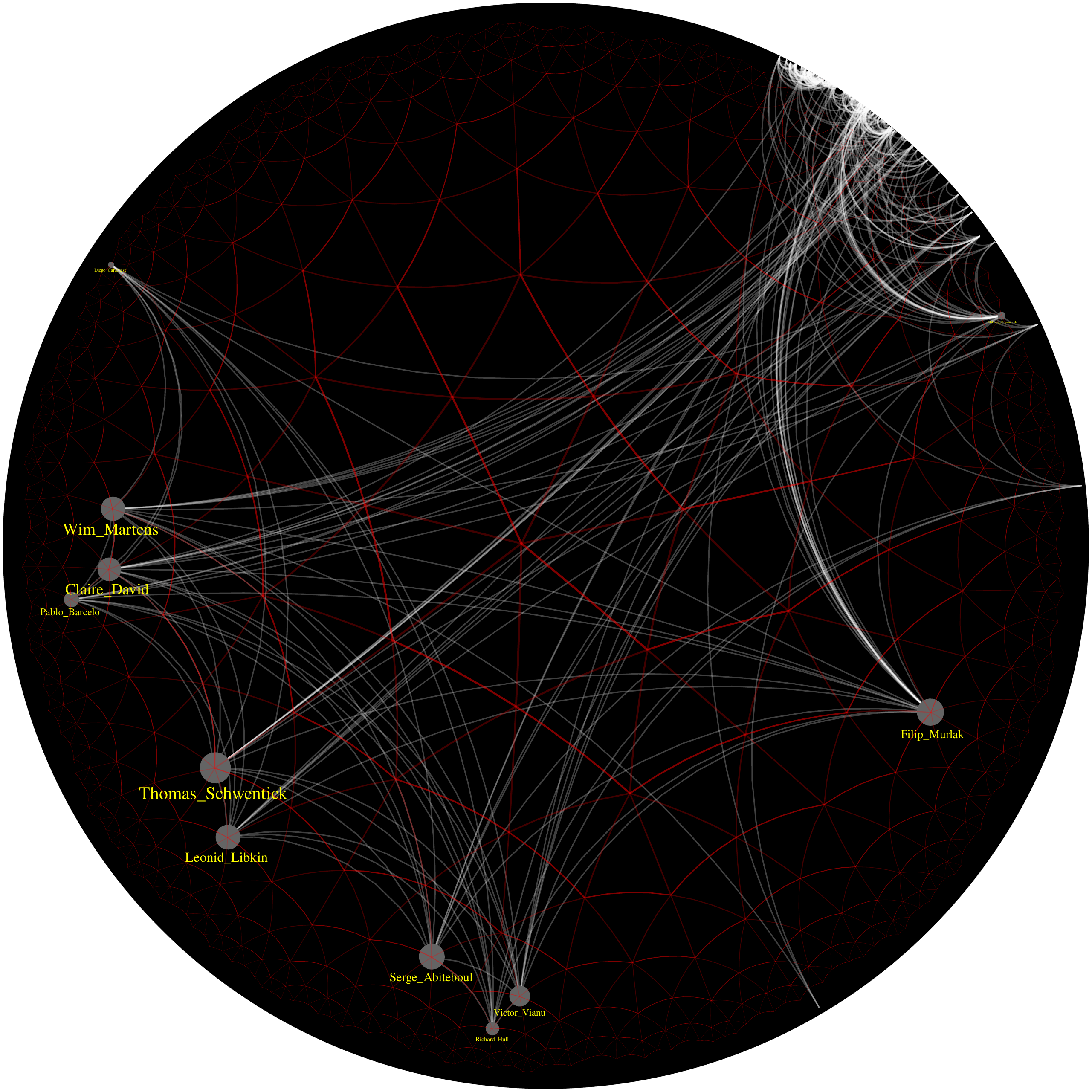}
\includegraphics[width=.26\textwidth]{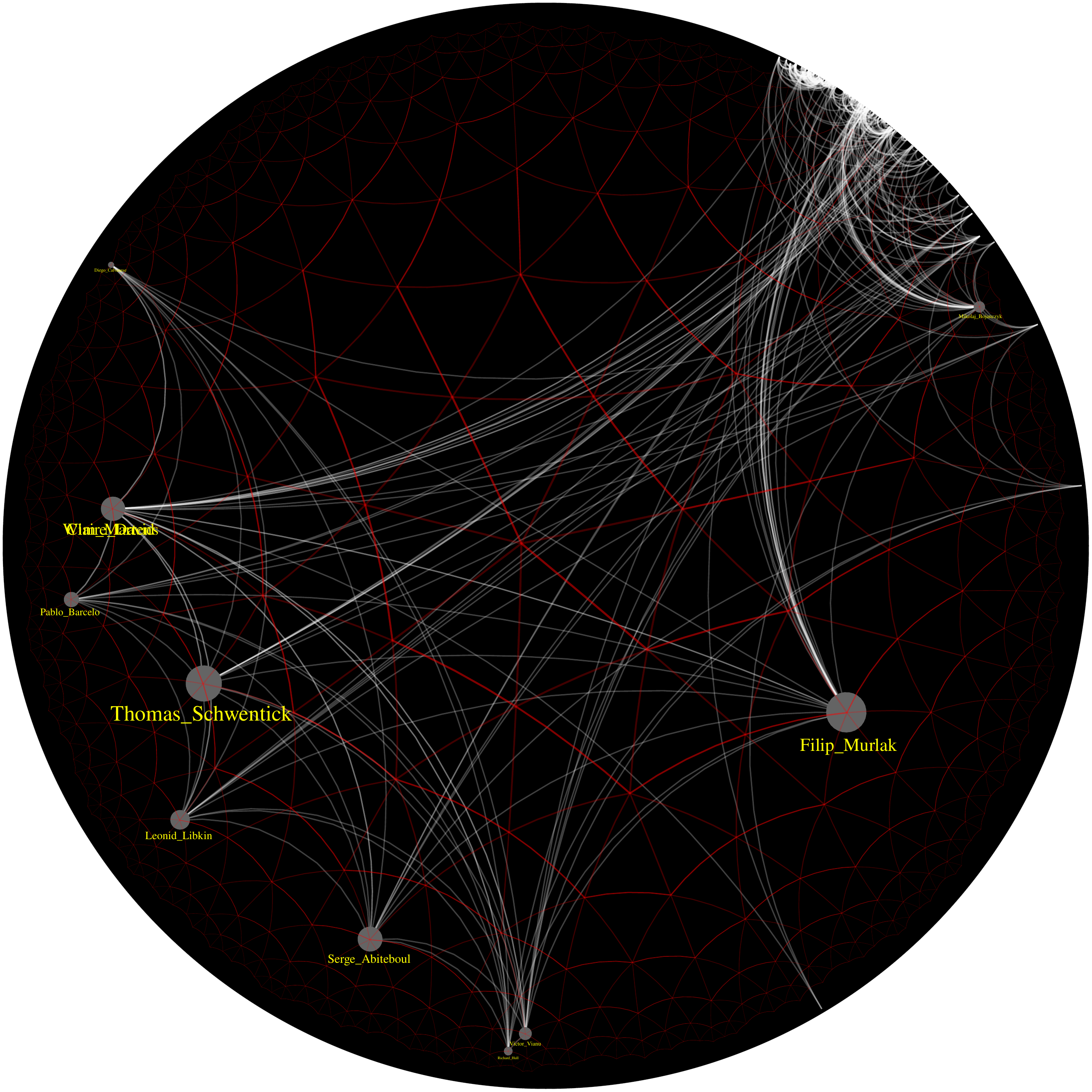}
\end{center}
\caption{\label{figvis}
Visualization of the FIT 2017 coauthorship network. 
The root vertex is far away at 1 o'clock,
the picture is centered on a specific cluster of
authors.
These authors collaborated on many papers;
many of them have also collaborated with other
authors in the network.
From left to right: 
the network embedded with the BFKL embedder;
vertices of the network are moved to the vertices of $\ghoo$;
one step of the local search.}
\end{figure}

Some experiments have not been mentioned in the main paper.
The local search can optimize one of the following measures: logistic log-likelihood based on the optimal
values of $R$ and $T$; optimal log-likelihood where edge probability is given separately for each distance;
monotonic optimal log-likelihood where the probability function has to be decreasing with larger distances;
total entropy obtained by summing the optimal log-likelihood of edge and vertex placement. Non-monotonic
optimal log-likelihood tends to scapegoat a fixed small distance (say, 3) and put all the pairs of close vertices
which are not actually connected at that distance; monotonic optimal does not have this problem. Entropy
minimization could be potentially used as a compression method; a quite good compression (46\%) is obtained for
the Facebook graph, though bigger graphs do not compress that well. An alternative non-local method of 
improving embedding is implemented, where vertices can immediately move to good locations far away (we start in the
center and move in the most promising direction); this improves the log-likelihood somewhat.

The current version uses a significant amount of RAM (2.4 GB for 6 iterations the followers-2009 network which has 74946 vertices,
on $\ghoo$; on $\ghoz$ it uses 1.4 GB, and on $\gooz$ it uses 1.2 GB). 
It should be possible to improve this by better memory management (currently vertices and segments which are
no longer used or just temporarily created are not freed), or possibly path compression. 

\end{document}